\title{Fault-tolerant parallel scheduling of 
arbitrary
length jobs on a shared channel}
\author{Marek Klonowski\footnote{Faculty of Fundamental Problems of Technology,
		Wrocław University of Technology, Wybrzeże Wyspiańskiego 27, 50-370 Wrocław, Poland.
		Email: \texttt{Marek.Klonowski@pwr.edu.pl}.} \and 
Dariusz R. Kowalski\footnote{Department of Computer Science,
	University of Liverpool, Ashton Building, Ashton Street,
	Liverpool L69 3BX, UK. 
	Email: \texttt{\{d.kowalski,j.mirek,pwong\}@liverpool.ac.uk}.
	\pw{The work is partly supported by the initiative Networks and Sciences Techologies (NeST), School of EEE\&CS, University of Liverpool.}} \and
Jarosław Mirek\footnotemark[2] \and
Prudence W.H. Wong\footnotemark[2]}
\date{}
\newtheorem{lemma}{Lemma}
\newtheorem{theorem}{Theorem}
\newtheorem{fact}{Fact}
\newtheorem{corollary}{Corollary}
\newtheorem{claim}{Claim}
\newtheorem*{claim*}{Claim}
\newcommand{\cO}{{\mathcal O}}
\newcommand{\cE}{{\mathcal E}}
\newcommand{\fba}{\textit{$f$-bounded}}
\newcommand{\ua}{\textit{unbounded}}
\newcommand{\tpbb}{\textsc{TaPeBB}}
\newcommand{\algx}{\textsc{ScaTri}}
\newcommand{\algy}{\textsc{DefTri}}
\newcommand{\algz}{\textsc{RanScaTri}}
\newcommand{\mat}{\textsc{Mix-And-Test}}
\newcommand{\conwor}{\textsc{Confirm-Work}}
\newcommand{\DA}{\textit{Do-All}}
\newcommand{\remove}[1]{}
\newcommand{\dk}[1]{#1}
\newcommand{\pw}[1]{#1}
\newcommand{\jm}[1]{#1}
\begin{document}

\maketitle

\begin{abstract}
We study the problem of scheduling jobs on fault-prone machines communicating via a shared channel without collision detection, also known as a multiple-access channel. We have $n$ arbitrary length jobs to be scheduled on $m$ identical machines, 
$f$ of which are prone to crashes by an adversary. A machine can inform other machines when a job is completed via the channel. 
Performance is measured by the total number of available machine steps during the whole execution. 
Our goal is to study the impact of preemption (i.e., interrupting the execution of a job and resuming later in the same or different machine) and failures on the work performance
of job processing.
The novelty is the ability to identify the features that 
determine the complexity (difficulty) of the problem. 
We show that the problem becomes difficult when preemption is not allowed,
by showing corresponding lower and upper bounds, the latter with algorithms
reaching them. 
We also prove that randomization helps even more, but only against a non-adaptive
adversary; in the presence of more severe adaptive adversary, 
randomization does not help in any setting.  
Our work has extended from previous work that focused on settings including: scheduling on multiple-access channel without machine failures, complete information about failures, or incomplete information 
about failures (like in this work) but with unit length jobs and, hence, without considering preemption. 
\end{abstract}

\setcounter{page}{0}
\thispagestyle{empty}

\pagebreak

\section{Introduction}
\label{sec01}

We examine the problem of performing $ n $ jobs by $ m $ machines reliably on a multiple-accessed shared channel (MAC). This problem in such model, originated by Chlebus, Kowalski and Lingas \cite{CKL}, has already 
been studied for unit length jobs,
whereas this paper extends it by considering jobs with arbitrary lengths,
studying the impact of features including preemption, randomization, and severity of failures on the work performance.

The notion of preemption may be understood as the possibility of not performing a particular job in one attempt. This means that a single job may be interrupted partway through performing it 
and then be resumed later by the same machine or even by another machine. Intuitively, the model without preemption is a more general configuration, yet both have subtleties that distinguish them noticeably.
In general, algorithms exploiting randomization may improve the performance and we aim to investigate when randomization actually helps.

The communication takes place on a shared channel, also known as multiple access channel, without collision detection. 
\pw{A message is transmitted successfully only when one machine transmits, and
when more than one message is transmitted then it is no \dk{different from background noise (i.e., no transmission on the channel).}
Therefore, we say that a job completion is confirmed only when the corresponding machine transmits such message successfully (before any failure it may suffer).}
We consider the impact of the severity of adversaries (measuring the severity of machine failures) on the performance.
In particular, we distinguish between adaptive and non-adaptive adversaries.
An \textit{adaptive $f$-bounded} adversary 
may decide to crash up to $f$ machines 
at any time of the execution.
A \emph{non-adaptive} adversary has to additionally \jm{decide} 
in the beginning of the execution which, up to $f$, machines and when to crash.
%
%
We use work as the effectiveness measure, i.e., the total number of machine steps available for computations. This measure is related (after division by $m$)
to the average time performance of available machines. 
\jm{Work may also correspond to energy consumption, as an operational machine generates (consumes) a unit of work (energy) in every time step.}

\subsection{Previous and related results}

Our work can be seen as an extension of the \DA\ problem defined in the seminal work by Dwork, Halpern and Waarts~\cite{DHW}.
\jm{This line of research was} followed up by several other papers~\cite{CDS,CGKS,CK,DMY,GMY} which considered the message-passing model with point-to-point communication. 
In all these papers the authors assumed that the performance of a single job contributes a unit to the work complexity. Paper~\cite{DMY} introduced a model, wherein the measure of work for the \DA\ solutions  
is the available machine steps (i.e., including idle rounds). They developed an algorithm which has work $\cO(n+(f+1)m)$  and message complexity $\cO((f+1)m)$. 
We also adopt this work measurement in our paper.  

In~\cite{GMY} the authors improved the message complexity to $\cO(f m^\varepsilon + \min\{f+1, \log m\}m)$, for any $\varepsilon >0 $ keeping  the same work complexity.
 \DA\ problem with failures controlled by a weakly-adaptive linearly-bounded adversary was studied in~\cite{CK}, wherein  the authors demonstrated a randomized algorithm with expected work  $\cO(m\log^*m)$.

The authors of~\cite{CGKS} developed a
deterministic algorithm with effort (i.e. sum of work and messages sent during the execution) $\cO(n+m^a)$, for a
specific constant~$1<a<2$, against the
unbounded adversary which may crash all but one machine.  They presented the first algorithm for this type of adversary with both work and  communication $o(n+m^2)$, \jm{where communication is understood as the total
number of point-to-point messages sent during the execution}. 
They also gave an algorithm achieving both work and communication
$\cO(n+m\log^2 m)$ against a strongly-adaptive linearly-bounded adversary.

In~\cite{GKS} an algorithm  based on gossiping protocol 
with work $\cO(n+m^{1+\varepsilon})$, for any fixed
constant~$\varepsilon$, is demonstrated.  
In~\cite{KS03}  \DA\ is studied for an asynchronous 
message-passing mode when executions are restricted such that 
every message delay is at most~$d$. The authors proved  $\Omega(n+md\log_d m)$ on the expected work. 
They developed several algorithms, among them a deterministic one with 
work $\cO((n+md)\log m)$. 
Further developments and a more comprehensive related literature can be found in the book by  Georgiou and Shvartsman~\cite{GSbook}.

The work closest to ours is the \DA\ problem of unit length jobs on a multiple access channel with no collision detection, 
which was first studied in~\cite{CKL}.
The authors have showed a lower bound of $\Omega(n+m\sqrt{n})$ on the work when there is no crash,
and $\Omega(n+m\sqrt{n}+m\min\{f,n\})$
in the presence of crashes against an adaptive $f$-bounded adversary.
They also proposed an algorithm, called \textsc{Two-Lists}, with optimal performance $\Theta(n+m\sqrt{n}+m\min\{f,n\})$.
A recent paper~\cite{KKM2017} discusses  different models of adversaries on a multiple access channel in the context of performing unit length jobs.

Extending the study of unit length jobs to arbitrary length jobs has taken place in other context, c.f.,~\cite{KWZ2017} for recent results and references to such
extension in the context of fault-tolerant centralized scheduling.
There has been a long history of studying preemption vs non-preemption, e.g.,~\cite{McN59,LST16, CG91}, to which our work also contributes.

\pw{The problem we study also finds application in other areas, e.g., in window scheduling.
It is common in applications (like broadcast systems~\cite{Bar-NoyNS03,GJW96}) that requests need to be served periodically in the form of windows.
Our problem could be applied to help selecting a window \dk{length} based on controlling work (\dk{aka} the number of available machine steps).}

\subsection{Our results}

In this paper, we consider \jm{fault-tolerant} scheduling of~$n$ arbitrary length jobs on~$m$ identical machines reliably over a multiple accessed channel.
\pw{Our contributions are threefold, on: deterministic preemptive setting, deterministic non-preemptive setting, and
randomized preemptive setting.}
In the setting with job preemption, we prove a lower bound 
$ L + \Omega(m\sqrt{L} + m\min\{f, L\} + m\alpha)$ 
on work, where $L$ is the sum of lengths of all jobs and $\alpha$
is the maximum length, which holds for both deterministic 
and randomized algorithms against an adaptive \fba\ adversary.
We design a corresponding deterministic distributed algorithm, 
called \algx, achieving work
$ L + \cO(m\sqrt{L} + m\min\{f, L\} + m\alpha)$, which matches the lower
bound asymptotically with respect to the overhead.
In the model without job preemption, we show slightly higher lower bound
$ L + \Omega\left(\frac{L}{n}m\sqrt{n} + \frac{L}{n} m\min\{f, n\} + m\alpha \right)$ on work, implying a separation between the two settings with and without preemption.
Similarly as in the previous setting, it holds for both deterministic 
and randomized algorithms against an adaptive \fba\ adversary, thus showing
that randomization does not help against adaptive adversary regardless of
(non-)preemption setting.
We develop a corresponding deterministic distributed algorithm, called \algy,
achieving work $ L + \cO(\frac{L}{n}m\sqrt{n} + \alpha m \min\{f, n\})$.
We then investigate the scenario with randomization
and derive a randomized distributed preemptive algorithm, called \algz, against non-adaptive adversaries.
This \jm{is a Las Vegas} algorithm \jm{that} achieves \jm{expected} work $\cO(L+m\sqrt{L}+m\alpha)$, and thus shows that
randomization helps against the non-adaptive adversary, lowering the performance
to the lower bound $ L + \Omega(m\sqrt{L} + m\alpha)$ (see Lemma \ref{lem22}).
The results are discussed and compared in details in Section~\ref{sec51}.
Figure~\ref{fig4} illustrates the complexity for different variants of the considered problem.

\begin{figure}[tb]
 \begin{center}
 \includegraphics[scale=0.7]{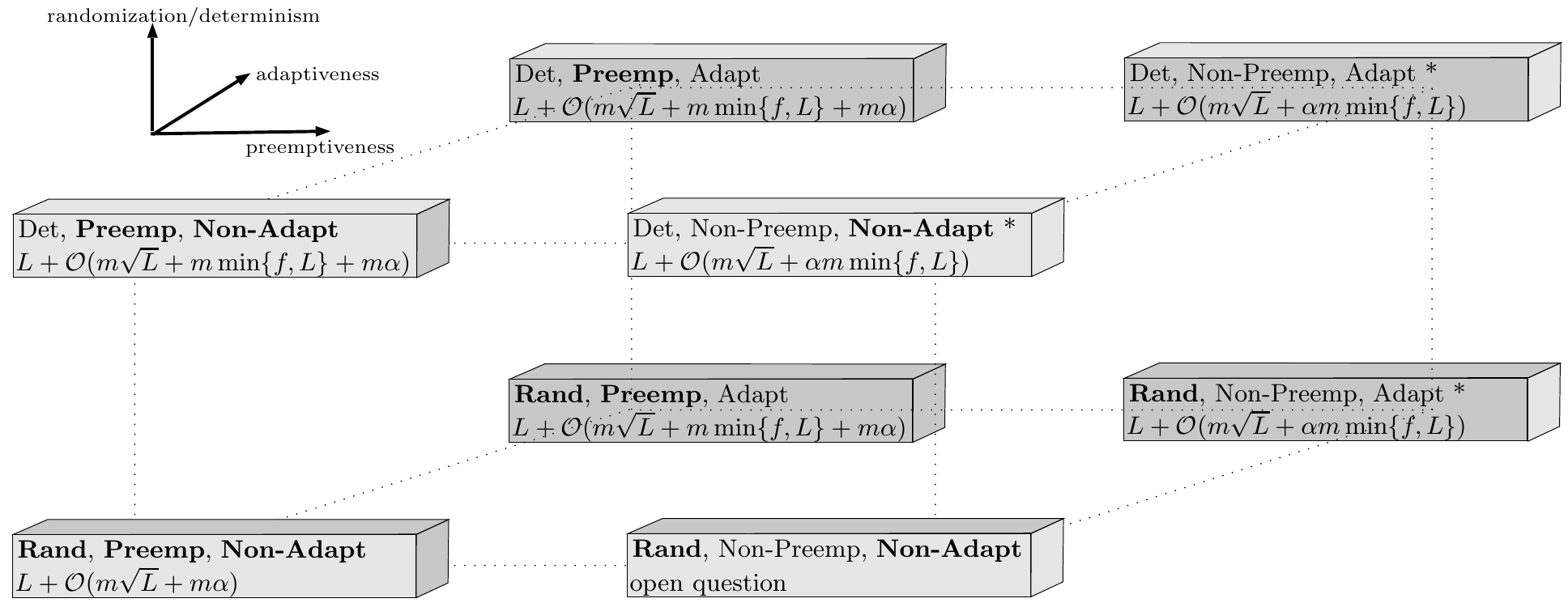}
 \caption{Illustration of the complexity of scheduling on MAC.
Legend: Rand - randomized algorithm, Det - deterministic algorithm, 
 Preemp - preemptive setting, Non-Preemp - non-preemptive setting, Adapt - adaptive adversary, Non-Adapt - non-adaptive adversary.
 \dk{We are considering three different features depending on whether they are present or not: preemptiveness feature changes
 	along the X-axis, adaptiveness of the adversary changes along
 	the Y-axis, and randomization/determinism changes along
 	the Z-axis.
 The features written in bold denote an easier configuration - for instance 
 (\textbf{Rand}, Non-Preemp, \textbf{Non-Adapt}) denotes a configuration with randomization, against a Non-Adaptive adversary, yet tasks are without preemption.
 Boxes with * indicate that the presented solution for such configuration has a gap when confronted with the lower bounds that we present.
 ``Open'' stands for an open question.}
 \pw{The results on the front facet \dk{(lighter boxes)} are for non-adaptive adversary indicating that
(i) preemption improves deterministic algorithms and (ii) randomization improves preemptive algorithms.
The results on the rear facet \dk{(darker boxes)} are for adaptive adversary indicating that preemption improves both deterministic and randomized algorithms.}} 
 \label{fig4}
 \end{center}
\end{figure}

\subsection{\pw{Paper} structure}

Section~\ref{sec11} describes the model and technical preliminaries.
In Section~\ref{sec21} we prove a lower bound and provide an algorithm matching its performance formula (asymptotically with respect to an overhead) in the model with preemption.
Section~\ref{sec31} provides similar study for the model without preemption.
Section~\ref{sec41} is dedicated to a randomized algorithm solving the problem with minimal work complexity.
The results in various settings are discussed and compared in Section~\ref{sec51},
and final conclusions given in Section~\ref{sec61}.

\section{Technical preliminaries}
\label{sec11}

In this section we describe the formal model, i.e.: machine capabilities, communication model, failure model and the concept of adversary,
the complexity measure used for benchmarking our algorithms and a high-level specification of a black-box procedure \tpbb\ that we will be using while designing our algorithms.
Additionally we present definitions of performing jobs and the technicalities regarding preemption.

\subsection{Machines}

In our model we assume having $ m $ \pw{identical} machines, with unique identifiers from the set $ \{1,\dots, m\} $. The distributed system of those machines is synchronized with a global clock,
and time is divided into synchronous time slots, called \textit{rounds}. All the machines start simultaneously at a certain moment. Furthermore every machine may halt voluntarily.
A halted machine does not contribute to work, in contrast to an operational machine that contributes to the work performance measure. 
Precisely, every operational machine generates a unit of work per round \pw{(even when it is idle)}, that has to be included while computing
the complexity of an algorithm.
In this paper by $ M $ we denote the number of operational,
i.e., not crashed, machines. \jm{$ M $ may change during the execution.}

\subsection{Communication model}

The communication channel for machines is the, widely considered in literature, {\em multiple-access channel} \cite{Chl, Gal}, also called a {\em shared channel}, where a broadcasted message reaches 
every operational machine. We do not allow simultaneous successful transmissions of several messages. In what follows our solutions work on a channel without collision detection, hence
when more than one message is transmitted at a certain round, then the machines hear a signal 
indifferent 
from the background noise.
\pw{A job is said to be \emph{confirmed} if a machine completing the job successfully communicates this information via the shared channel.}

\subsection{Adversarial model}

Machines may fail 
\dk{by crashing,}
which happens because of an adversary. One of the factors that describes an adversary is the power~$ f $, that is, the total number of failures that may be enforced.
We assume that $ 0 \leq f \leq m-1 $, thus at least one machine remains operational until an algorithm terminates. Machines that were crashed neither restart nor contribute to work.

We consider two adversarial models.
An \emph{adaptive $f$-bounded adversary} 
can observe the execution and make decisions about up to $f$
crashes online.
%
A \emph{non-adaptive $f$-bounded adversary},
\jm{in addition to} choosing the faulty subset of $ f $ machines, the adversary has to declare in the beginning of the execution in which rounds
those crashes will occur, i.e., for every machine declared as faulty, there must be a corresponding \jm{round number} in which the fault will take place.
It is worth noticing that in the context of deterministic algorithms such an adversary is consistent with the adaptive adversary that may decide online which
machines will be crashed at any time, as the algorithm may be simulated by the adversary before its execution, providing knowledge about the best decisions to be taken.
Finally, an $(m-1)$-bounded adversary is also called an {\em unbounded} adversary.

\subsection{Complexity measure}

The complexity measure to be used in our analysis is, as mentioned before, \textit{work}, also called {\em available machine steps}. 
It is the number of available machine steps for computations. This means that each operational machine
that did not halt contributes a unit of work in each round even if it is idling. 

Precisely,
assume that an execution starts when all the machines
begin simultaneously in some fixed round $ r_{0} $. Let $ r_{v} $ be the round when machine $ v $ halts (or is crashed). Then its work contribution is equal to $ r_{v} - r_{0} $. 
Consequently the algorithm complexity is the sum of such expressions over all machines i.e.: $ \sum_{1 \leq v \leq m}(r_{v} - r_{0}) $.

\subsection{Jobs and reliability}

\jm{We assume that the list of jobs is known to all the machines} and we expect that machines perform all $ n $ jobs as a result of executing an algorithm. We assume that \textit{jobs have arbitrary lengths}, are
\textit{independent} (may be performed in any order) and \textit{idempotent} (may be performed many times, even concurrently by different machines).

Furthermore a \textit{reliable} algorithm satisfies the following conditions in any execution: 
\begin{enumerate}
 \item \textit{All the jobs are eventually performed, if at least one machine remains non-faulty}.
 \item \textit{Each machine eventually halts, unless it has crashed}.
\end{enumerate}

We assume that jobs have some minimal (atomic or unit) length. For simplicity, we can also think that 
all the lengths are a multiple of the minimal length. As the model that we consider is synchronous, this minimal length may be justified by the round duration required for local computations for each machine.
By $ \ell_{a} $ we denote the length of job~$ a $.
We also use~$ L $ to denote the sum of the lengths of all jobs, i.e., $ L = \sum_{i} \ell_{i} $. Finally, by $ \alpha $ we denote the length of the longest job.

\subsection{Preemptive vs Non-Preemptive Model}

By the means of \textit{preemption} we define the possibility of performing jobs in several pieces. Precisely, consider a job~$ a $, of length $ \ell_{a} $ 
(for simplicity we assume that $ \ell_{a} $ is even) and machine~$ v $ is scheduled to perform job~$ a $ at some time of the algorithm execution. Assume that~$ v $ performs $ \ell_{a}/2 $ 
units of job $ a $ and then reports such progress. When preemption is available the remaining $ \ell_{a}/2 $ units of job~$ a $ may be performed by some machine~$ w $ where $ w \neq v $.

A job $ a $ of length $ \ell_{a} $
means that job~$ a $ consists of $ \ell_{a} $ time units needed to perform it fully. Such view allows to think that the job
is a chain of $ \ell_{a} $ \textit{tasks}, and hence all jobs form a set of chains of unit length tasks.

We further denote by~$ a_{k} $ task~$ k $ of job~$ a $. 
However, when we refer to a single job, disregarding its tasks, we refer it simply to job~$ a $.

\noindent We define two types of jobs \pw{regarding how intermediate progress is handled}:

\begin{itemize}
 \item \textbf{Oblivious job} --- it is sufficient to have the knowledge that previous tasks were done, in order to perform \pw{remaining tasks} from the same job. In other words, any information from the in between progress 
 does not have to be announced.
 
 \item \textbf{Non-Oblivious job} --- any intermediate progress needs to be reported through the channel when \pw{interrupting the job to resume later, and possibly  pass the job to another machine}. 
 
\end{itemize}

In the preemptive oblivious model a job may be abandoned by machine~$ v $ on some task~$ k $ without confirming progress up to this point on the channel and then continued from the same task~$ k $ 
 by \pw{any} machine~$ w $. 
 
 As an example of preemptive oblivious model, consider a scenario that there is a job that a shared array of length~$ x $ needs to be erased. If a machine stopped performing this job at some point, another
 one may reclaim that job without the necessity of repetition by simply reading to which point it has been erased.

 For the preemptive non-oblivious model, consider that
 a machine executes Dijkstra's algorithm for finding the shortest path. If it becomes interrupted, then another machine cannot reclaim this job otherwise than by performing the job
 from the beginning, unless intermediate computations have been shared. In other words preemption is available with respect to maintaining information about tasks.

On the contrary to the preemptive model, we also consider the model without preemption i.e. where each job can be performed by a machine \pw{only} in one piece - when a machine is crashed while
performing such job, the whole progress is lost, even if it was reported on the channel before the crash took place.

\subsection{The Task-Performing Black Box (\tpbb)}
\label{sec:tpbb}

\dk{The algorithms we design in this paper employ a black-box procedure 
	for arbitrary length jobs that is able to perform reliably a subset of input
(in the form of jobs consisting of chains of consecutive tasks) or report that something went wrong.
In what follows, we specify this procedure, called 
{\em Task-Performing Black Box} (\tpbb\ for short), and argue that it can be implemented.
}

\jm{We use the procedure in both deterministic and randomized solutions. Precisely all our algorithms use \tpbb\ and despite of that they achieve different performances in the sense of complexity.
In what follows the most important ideas of our results lie within how to preprocess the input rather than how to actually perform the jobs, so we believe that employing such a black-box improves 
the clarity of presentation.}

\remove{
When considering jobs that have arbitrary lengths, 

we are rather interested in the matter of how to insert a chunk of input (jobs forming chains of consecutive tasks) to a procedure, 
which as a result 
would perform them reliably or inform that something went wrong. For this reason
and for clarity of presentation, we 
concentrate on the most important ideas regarding 
our algorithms and 
treat the parts where and how the jobs are actually performed as a black box procedure. In what follows, we specify this procedure, called 
{\em Task-Performing Black Box} (\tpbb\ for short), and argue that it can be implemented.

}

\paragraph{General properties of \tpbb.} 

A synchronous system is characterized by time being divided into synchronous slots, that we already called rounds. In what follows, each round is a possibility for machines to transmit.
  The nature of arbitrary length jobs leads however to a concept that the time between consecutive broadcasts needs to be adjusted. Especially when a job is long, for some 
  configurations it may be  better to broadcast the fact of performing the job fully, rather than broadcasting semi-progress multiple times.
  Therefore, we assume that \tpbb\ has a feature of changing the duration between consecutive broadcasts and we will call this parameter a \textit{phase}. 
	Denote the length of a phase by $ \phi $.
	\dk{Unless stated otherwise,} we assume that $ \phi = 1 $, i.e., the duration of a phase and a round is consistent, thus machines may transmit in any round.

\paragraph{Input: $ \texttt{\tpbb}(v, d, \texttt{JOBS}, \texttt{MACHINES}, \phi) $}

\begin{itemize}
 \item \jm{$ v $ represents the id of a machine executing \tpbb.}
 \item \tpbb\ takes a list of machines $\texttt{MACHINES}$ and a list of jobs $\texttt{JOBS}$ as the input, yet from the task perspective, i.e., the jobs are provided as a chain of tasks and are performed according to their ids.
 It may happen that a job is not done fully and \pw{only some} initial segment of tasks forming that job could be performed.
 \item Additionally, the procedure takes an integer value $ d $. It specifies the number of machines that will be used in the procedure for performing the provided pieces of jobs, 
 which is related to the amount of work accrued during a single execution. The procedure works in such a way that each working machine is responsible for performing a number of jobs.
 For clarity we assume that \tpbb\ always uses the initial~$ d $ machines from the list of machines.
 \item We call a single execution an \textit{epoch} and the parameter~$ d $ is called the length of an epoch (i.e., the number of phases that form an epoch).
 
 \item \jm{$\phi$ is the length of a phase i.e. the duration between consecutive broadcasts.}
\end{itemize}
\paragraph{Output:} \pw{what jobs/tasks have been done and which machines have crashed.}

\begin{itemize}
 \item Having explained what the length of an epoch is in \tpbb\, \pw{we then describe} the capability of performing tasks in a single epoch,
 \pw{which is} understood as the maximal number of jobs that may be confirmed in an epoch, when
 it is executed fully without any adversarial distractions.
 \jm{Let us first note, that \tpbb\ allows to confirm $ j $ tasks in some round $ j $. This comes from the fact that if a station worked for $ j $ rounds and was able to perform one task per round, then
it can confirm at most $ j $ tasks when it comes to broadcasting in round $ j $.}
 \jm{Therefore, the capability of performing tasks in an epoch} is at least $ \sum_{j=1}^{d} j $ which is the sum of an arithmetic
\jm{series} with common difference equal $ 1 $ \jm{over all rounds of an epoch}. We refer to this capability as a \textit{triangle}, as if we take a geometric approach and draw lines or boxes of increasing lengths,
then drawing this sequence would form a triangle.
 \item As a result of running a single epoch we have an output information about which tasks were actually done and whether there were any machines identified as crashed, when
machines were communicating progress (machines identify which jobs were done after an epoch by comparing information about performed tasks).
\end{itemize}

\pw{A candidate algorithm to}
serve as \tpbb\, is the \textsc{Two-Lists} algorithm from \cite{CKL} 
and \pw{we refer readers to the details therein}. \jm{Nevertheless, we assume that it may be substituted with an arbitrary algorithm fulfilling the requirements that we stated above.}

\begin{fact} \label{fact11}
 \tpbb\ 
performs at most $ \sum_{j=1}^{d} j $ tasks while generating $ md \phi $ work \dk{in each execution}.
\end{fact}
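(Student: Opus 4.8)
The plan is to unpack the two quantities in the statement directly from the specification of \tpbb\ given above. The work bound is the easy half: an epoch consists of $d$ phases, each of length $\phi$, so an epoch lasts $d\phi$ rounds; since at most $m$ machines are operational and each operational machine contributes one unit of work per round (even when idle), the total work charged during one execution is at most $m \cdot d\phi = md\phi$. I would state this first, noting that it is an upper bound precisely because crashed or halted machines contribute less.

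For the bound on the number of tasks performed, I would argue as follows. Fix any single working machine among the $d$ used by \tpbb. Its only opportunities to \emph{confirm} completed tasks are the broadcast rounds; in the default setting $\phi = 1$ these are the rounds $1, 2, \dots, d$ of the epoch. The excerpt already records the key observation: a machine that has been running for $j$ rounds, performing at most one task per round, can confirm at most $j$ tasks when it broadcasts in round $j$. Summing this ``staircase'' of confirmations over the $d$ broadcast rounds of the epoch gives a per-execution ceiling of $\sum_{j=1}^{d} j$ on the number of confirmable tasks, which is exactly the ``triangle'' capability described before the Fact. I would then point out that because jobs/tasks are idempotent and the scheduling assigns each working machine a disjoint chunk of the task chains, the global count of tasks that can be confirmed in one epoch is governed by this same triangular bound — the bound is per-execution, not per-machine, since \tpbb\ distributes the responsibility so that the staircase constraint applies to the aggregate confirmation pattern. (If the underlying implementation is \textsc{Two-Lists} from \cite{CKL}, this is precisely the structural property of its list-based schedule, and I would cite that for the concrete instantiation while noting any black box meeting the stated interface has the same guarantee by assumption.)

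The two estimates together give: at most $\sum_{j=1}^{d} j$ tasks performed, at most $md\phi$ work generated, in each execution — which is the claim. The main thing to be careful about is the second part: one must make sure the triangular bound is applied to the right object (the epoch-wide confirmation schedule, over the $d$ broadcast slots) rather than naively to each of the $d$ machines separately, and one must invoke the $\phi=1$ normalization (or, for general $\phi$, observe that broadcasts occur every $\phi$ rounds so the staircase is over $d$ broadcast opportunities regardless of $\phi$, while the work scales as $md\phi$). Everything else is a one-line arithmetic observation about the sum of an arithmetic progression and about charging one work unit per operational machine per round.
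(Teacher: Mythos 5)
Your proof is correct and, for the part of the claim the paper actually argues, follows the same route: the work bound $md\phi$ is exactly the paper's one-line observation (at most $m$ machines running for $d$ phases of length $\phi$ each). For the task bound, the paper's proof of Fact~\ref{fact11} is silent; it takes $\sum_{j=1}^{d} j$ as already established by the specification of \tpbb\ in Section~\ref{sec:tpbb} (the ``a machine that has worked for $j$ rounds can confirm at most $j$ tasks when it broadcasts in round $j$'' staircase argument). You re-derive that staircase bound from scratch, which is fine and arguably makes the Fact self-contained; the underlying argument is the same one the paper uses in the \tpbb\ description rather than in the proof itself. One small wrinkle to tidy if you keep the longer version: your opening framing ``Fix any single working machine \ldots its only opportunities to confirm are rounds $1,\dots,d$'' reads as if one machine broadcasts in every round, which is not how a \textsc{Two-Lists}-style schedule works (a different machine broadcasts each round). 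You do recover from this a few lines later by restating that the staircase applies to the aggregate confirmation pattern over the $d$ broadcast slots, which is the correct object; just lead with that. Also note that for general $\phi>1$, the quantity $\sum_{j=1}^{d} j$ is really counting confirmable \emph{items} at the $d$ broadcast opportunities rather than unit tasks, consistent with how the paper invokes the Fact when $\phi$ equals the average job length; your parenthetical already gestures at this, so no substantive gap, but it is worth being precise about what is being counted.
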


\begin{proof}
 We have at most~$ m $ machines working for~$ d $ phases, and the length of each phase is set to~$ \phi $ rounds, hence this gives $ md \phi $ units of work in total. 
\end{proof}

\section{Preemptive model}
\label{sec21}

In this section we consider the scheduling problem in the model with preemption, which is intuitively an easier model to tackle.
As mentioned before, the multiple-access channel has no collision detection.
We first show a lower bound for oblivious jobs (Section~\ref{sec:preempt_lb}) and 
then present and analyze our algorithm for non-oblivious jobs with matching performance (Section~\ref{sec:preempt_alg}).
\pw{Notice that a lower bound for oblivious jobs is a stronger lower bound as it also applies for non-oblivious jobs
and similarly an upper bound for non-oblivious jobs is a stronger upper bound.
The matching bounds imply that with preemption, the distinction between oblivious jobs and non-oblivious jobs does not matter.}


\subsection{Lower bound}
\label{sec:preempt_lb}


We first recall the minimal work complexity introduced in~\cite{CKL}.

\begin{lemma}[\cite{CKL}, Lemma 2] \label{lem21}
A reliable, distributed algorithm, possibly randomized, performs work $\Omega(n + m\sqrt{n})$ in an execution in which no failures occur.
\end{lemma}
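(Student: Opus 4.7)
The plan is to establish the two terms $\Omega(n)$ and $\Omega(m\sqrt{n})$ separately and then combine them. The $\Omega(n)$ term is essentially immediate: since the algorithm is reliable and must actually perform each of the $n$ unit-length jobs at least once, and each job requires one machine to spend one operational round on it, at least $n$ of the available machine steps are consumed just doing work on jobs. Hence the total work is $\Omega(n)$ irrespective of any scheduling cleverness.

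For the $\Omega(m\sqrt{n})$ term I would run an adversary-free counting argument based on the structure of the multiple-access channel. First I would fix an execution (a sample path for a randomized algorithm) and let $T$ be the round in which the final job confirmation takes place. The key observation I need to establish is that \emph{every} machine is forced to remain operational up to round $T$: since in the failure-free case no machine crashes, and a reliable algorithm requires a machine to halt only once it knows that every job has been performed, information about the last confirmation can reach a machine no earlier than round $T$. From this I obtain $\sum_v r_v \ge mT$.

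The heart of the argument is then to bound $T$ from below. Because the channel has no collision detection and simultaneous transmissions are indistinguishable from silence, in the first $T$ rounds there are at most $T$ successful broadcasts and therefore at most $T$ \emph{distinct} machines that ever transmit successfully. Call this set $S$ with $|S|\le T$. A machine in $S$ can only confirm jobs it has itself performed, and after $T$ rounds of work it can have performed at most $T$ jobs. Hence the total number of distinct jobs that can be confirmed by round $T$ is at most $|S|\cdot T \le T^2$. Since all $n$ jobs must be confirmed by the end of the execution, $T^2 \ge n$, i.e.\ $T \ge \sqrt{n}$, and therefore the total work is at least $mT \ge m\sqrt{n}$. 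Combining with the $\Omega(n)$ bound (and also noting $T\ge n/m$ for free), the work is $\Omega(n+m\sqrt{n})$.

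The extension to randomized algorithms is immediate because the counting bound $T\ge \sqrt{n}$ holds on every sample path (it uses only the channel model and the reliability of the execution), so taking expectations preserves the inequality $\mathbb{E}[\text{work}] \ge m\sqrt{n}$. The main conceptual obstacle is formalizing the claim that machines cannot halt before round $T$: one must argue that in the failure-free setting a machine's view is a function of the broadcasts it has heard, and that halting safely is only consistent with having heard confirmations of all $n$ jobs, which first occurs no earlier than the round of the final confirmation. Once that is pinned down, the counting $|S|\cdot T \le T^2$ is routine.
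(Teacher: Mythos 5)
Your counting idea (at most $T^{2}$ distinct jobs can be confirmed on a collision-free channel within $T$ rounds) is the right basic ingredient, but the two supporting claims you hang it on are both false, and this makes the $\Omega(m\sqrt{n})$ part of the argument break down.

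First, you assert that every machine must stay operational until round $T$, the round of the final confirmation, because ``a machine halts only once it knows every job has been performed.'' That inference is not valid: a machine can \emph{know} that a job is done by having performed it itself, without any broadcast. A machine may therefore legitimately halt strictly before $T$ provided that every job not yet confirmed on the channel is one it has personally finished. Second, you claim that ``all $n$ jobs must be confirmed by the end of the execution,'' which is also not required by reliability --- some jobs may be performed silently and never broadcast at all. In the extreme, an algorithm in which every machine locally performs all $n$ jobs and nobody ever transmits is reliable; then $T$ is undefined (or $0$), and both $mT$ and $T^{2}\ge n$ give you nothing, even though the conclusion of the lemma still holds.

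The way the source (and the analogous Lemma~\ref{lem31} in this paper) repairs exactly these two holes is to look at the \emph{first} machine $v$ to halt, at some round $\gamma'$, rather than at the last confirmation. An indistinguishability argument (crash all machines other than $v$ at round $\gamma'$; $v$ cannot tell this apart, yet reliability must still hold) forces $v$ to have personally performed every job not confirmed by round $\gamma'$. Your counting then shows at most $O((\gamma')^{2})$ jobs were confirmed by round $\gamma'$, so $v$ did at least $n-O((\gamma')^{2})$ jobs in $\gamma'$ rounds, which forces $\gamma' = \Omega(\sqrt n)$. Since $v$ halts first, every machine runs at least $\gamma'$ rounds, and the total work is $\Omega(m\sqrt n)$. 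This simultaneously fixes the ``machines may halt early'' objection (we only use the first halt) and the ``jobs need not be confirmed'' objection (unconfirmed jobs are charged to $v$'s own work). I would suggest reworking your write-up around that pivot.
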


\pw{Recall that~$ L $ denote the total length of jobs
and~$ \alpha $ denote the length of the longest job.}
As jobs are built from unit length tasks, we can look at this result from the task perspective. 
Precisely, $ L $ can be considered to represent
the number of tasks needed to be performed by the system, then the lower bound for our model translates in a straightforward way. In addition, because component~$ n $ appeared in the formula above with factor~$ 1 $ in front of it, then we may take it out of the asymptotic notation.
Furthermore in our model there is a certain bottleneck dictated by the longest job. After all there may be an execution with one long job, in comparison to others being short. This gives the magnitude
of work in the complexity formula.

\begin{lemma} \label{lem22}
 A reliable, distributed algorithm, possibly randomized, performs work $ L + \Omega(m\sqrt{L} + m\alpha)$ in an execution in which no failures occur on any set of oblivious jobs with arbitrary lengths with preemption.
\end{lemma}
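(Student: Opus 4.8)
The plan is to reduce to the failure-free lower bound of Lemma~\ref{lem21} (the $\Omega(n+m\sqrt{n})$ bound) applied at the task level, and then separately argue the $m\alpha$ term. First I would observe that every reliable algorithm must, in particular, perform \emph{all} $L$ unit tasks at least once, and moreover a machine contributes a unit of work per round regardless of whether it is idle; hence the total work is at least $L$ just to cover the tasks (here I use that $L$ enters the $\cO(n+m\sqrt{n})$ bound with constant $1$, exactly as the paragraph preceding the statement spells out). The more delicate part is the overhead $\Omega(m\sqrt{L})$ and $\Omega(m\alpha)$.

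For the $\Omega(m\sqrt{L})$ term, the approach is to treat the $L$ tasks as $L$ unit-length jobs and invoke Lemma~\ref{lem21} verbatim with $n$ replaced by $L$: any reliable distributed (possibly randomized) algorithm for the task instance must perform work $\Omega(L + m\sqrt{L})$ even with no failures. The one thing to check is that an algorithm solving the \emph{job} instance (with preemption, oblivious) induces a reliable algorithm for the corresponding \emph{task} instance with no more work — which is immediate, since preemptive oblivious execution of a job of length $\ell_a$ is nothing other than a schedule of its $\ell_a$ constituent unit tasks, and confirmation of the job can only happen after all its tasks are done. So the work lower bound transfers. Combined with the trivial $L$ lower bound, this yields $L + \Omega(m\sqrt{L})$.

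For the $\Omega(m\alpha)$ term, I would use an adversary-free instance consisting of a single job of length $\alpha$ (plus, if one wants the full instance, arbitrary other jobs summing to $L$). Since this single job is a chain of $\alpha$ tasks that must be performed sequentially in the sense that the job is confirmed only after all $\alpha$ tasks are completed, and since the channel permits only one successful confirmation per round, no machine can halt before round $\alpha$: if some machine halted earlier it could not have learned the job was confirmed (the confirming broadcast has not yet occurred), violating reliability — or, more carefully, in \emph{every} execution the confirming broadcast occurs no earlier than round $\alpha$, so every machine that is still running is forced to stay up until then. With $m$ machines each incurring $\Omega(\alpha)$ work this gives $\Omega(m\alpha)$. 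The main obstacle here is making the "no machine halts before round $\alpha$" argument airtight for a \emph{distributed} algorithm where machines could in principle halt on a hard-coded schedule; the resolution is that reliability condition~1 (all jobs eventually performed) forces at least one machine to be active performing the $\alpha$ tasks through round $\alpha$, and reliability condition~2 together with the fact that completion information can only be disseminated after round $\alpha$ forces every machine that halts to have stayed up — essentially a machine cannot safely halt until it knows termination is legitimate, which cannot happen before round $\alpha$.

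Finally I would combine the three bounds: work $\ge L$, work $=\Omega(m\sqrt L)$, and work $=\Omega(m\alpha)$, noting these can be taken simultaneously on a single instance (e.g.\ the instance witnessing Lemma~\ref{lem21} at task granularity, padded so that one job has length $\alpha$ and lengths sum to $L$), to conclude work $L + \Omega(m\sqrt{L} + m\alpha)$. I would also remark that since the argument uses no failures and treats randomized algorithms via the same counting/information-theoretic reasoning (as in Lemma~\ref{lem21}), the bound holds against the weakest adversary and hence a fortiori it is consistent with the stronger claim $L + \Omega(m\sqrt L + m\min\{f,L\} + m\alpha)$ stated for adaptive adversaries elsewhere.
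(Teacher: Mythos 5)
Your proposal takes essentially the same route as the paper: the $\Omega(m\sqrt{L})$ part is obtained by reading Lemma~\ref{lem21} at task granularity (substituting $L$ for $n$), and the $\Omega(m\alpha)$ part is the observation that the longest job pins all $m$ machines for $\alpha$ rounds. Your version is actually more carefully reasoned than the paper's on the $m\alpha$ term — the paper simply asserts that $m$ machines working $\alpha$ rounds generate $m\alpha$ work without explaining why no machine may halt earlier; to make your "cannot safely halt" step fully airtight you would want to state the standard indistinguishability/crash argument explicitly (if some machine $v$ halts at round $r<\alpha$ in the failure-free execution, crash every other machine right after round $r$; since the $\alpha$-length chain of tasks cannot have been completed, nor confirmed, by round $r$, the job remains outstanding while $v$, the sole survivor, has halted — contradicting reliability), rather than appealing informally to "knowing termination is legitimate".
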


\begin{proof}
 $ L + \Omega(m\sqrt{L}) $ follows from Lemma~\ref{lem21}. \pw{It suffices to} concentrate on the remaining part.
 \jm{If $ m $ machines are operational for $ \alpha $ rounds and the adversary does not distract their work, then they generate $ m\alpha $ units of work,
 what is consistent with performing the longest job.}
\end{proof}

In the presence of failures, we extend the following result.

\begin{lemma}[\cite{CKL}, Theorem 2] \label{lem24}
The $f$-bounded adversary, for $ 0 \leq f < m $, can force any reliable, possibly randomized, algorithm 
to perform work 
 $ \Omega(n + m\sqrt{n} + m\min\{f, n\}) $.
\end{lemma}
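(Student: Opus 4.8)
I would prove this by treating the three additive summands independently. Since $a+b+c=\Theta(\max\{a,b,c\})$ for nonnegative reals, it is enough to produce, for each of the three terms, an $f$-bounded adversary that forces work of that order, and then have the adversary follow whichever of the three strategies extracts the most work from the given algorithm. The term $\Omega(n)$ is unconditional: each job is performed at least once, each performance of a unit-length task keeps a machine busy for a whole round, and distinct performances occupy distinct machine-rounds, so the accumulated work is at least $n$. The term $\Omega(m\sqrt{n})$ is exactly Lemma~\ref{lem21} applied to a failure-free execution. Both of these strategies spend no crashes, hence are $f$-bounded for every $f$, so the whole burden is to force $\Omega(m\min\{f,n\})$ work.

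For that, put $k=\min\{f,n\}$ and assume $k\ge 2$ (otherwise the bound is the trivial $\Omega(m)$). The plan is to use the adversary $\cB$ that, in any round in which exactly one machine is about to transmit --- the only configuration in which information can traverse a channel without collision detection --- crashes that machine immediately before its transmission. Over the first $k$ rounds this costs $\cB$ at most $k\le f$ crashes, and since $k\le f\le m-1$ at least $m-t$ machines are alive in each round $t\le k$; moreover no machine receives any message during this window. Assuming for the moment that no machine can legitimately halt before any information has propagated (the delicate point, discussed below), essentially all of the first $k$ rounds (shrinking the window to $\min\{k,n-1\}$ when $f\ge n$, which changes the count only by a constant factor) see at least $m-t$ alive, unterminated machines, so
\[
  \text{work}\ \ge\ \sum_{t=1}^{k}(m-t)\ =\ km-\tfrac{k(k+1)}{2}\ \ge\ \tfrac{km}{2}
\]
using $k\le m-1$, which is $\Omega(m\min\{f,n\})$. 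Combining this with the failure-free bound gives the full claim.

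The main obstacle is exactly the step asserting that a machine cannot safely halt while it has heard nothing and has not itself completed all $n$ jobs. When $f<m-1$ a machine ``knows'' that at least $m-f\ge 2$ machines survive and could in principle defer the remaining jobs to another survivor, so this has to be established by an indistinguishability argument that exploits the adaptivity of the adversary: one pairs any purported early-halting machine with a second, still $f$-bounded, execution that is indistinguishable to that machine through the halting round but in which it is obliged to do more work. The complementary case is benign --- an algorithm that never transmits suffers no crashes from $\cB$ but then must let every one of the $m$ machines perform all $n$ jobs on its own, at cost $mn\ge m\min\{f,n\}$. Making this dichotomy precise (in particular, ruling out clever early termination) is the core of the argument, and is carried out in~\cite{CKL}.
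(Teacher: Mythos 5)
The paper states Lemma~\ref{lem24} purely as a citation to~\cite{CKL} without reproducing its proof, so the relevant comparison is to that source --- and to the paper's own Lemma~\ref{lem32}, the non-preemptive analogue, which adapts the same technique. Your decomposition into the three additive terms and your adversary strategy (crash any machine about to broadcast alone during the first $\Theta(\min\{f,n\})$ rounds, combined with an indistinguishability argument that no machine can halt before that window closes) is exactly the CKL/Lemma~\ref{lem32} template, and you correctly single out the no-early-halt claim as the sole nontrivial step; you sketch rather than execute the triple-execution indistinguishability argument that establishes it, but that is an accurate account of where the substance lies.
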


Combining all the results above, we conclude with the following theorem, setting the lower bound for the considered problem.

\begin{theorem} \label{thm22}
 The adaptive f-bounded adversary, for $ 0 \leq f < m $, can force any reliable, possibly randomized and centralized algorithm 
and a set of oblivious jobs 
 of arbitrary lengths with preemption, to perform work $ L + \Omega(m\sqrt{L} + m\min\{f, L\} + m\alpha)$.
\end{theorem}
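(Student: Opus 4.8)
The plan is to combine the three existing lower bounds — Lemma \ref{lem22} (which gives $L + \Omega(m\sqrt{L} + m\alpha)$ in failure-free executions) and Lemma \ref{lem24} (which gives $\Omega(n + m\sqrt{n} + m\min\{f,n\})$ in the unit-length model) — reinterpreting the latter through the task perspective already set up in the text. Since each job of arbitrary length is a chain of unit-length tasks, an instance with $L$ tasks behaves, for the purpose of the counting arguments underlying Lemma \ref{lem24}, like an instance of $L$ unit jobs; hence the $f$-bounded adversary can force $\Omega(L + m\sqrt{L} + m\min\{f,L\})$ work on such an instance. Because the constant in front of $L$ is $1$, we pull it out of the asymptotic notation exactly as done after Lemma \ref{lem21}, obtaining $L + \Omega(m\sqrt{L} + m\min\{f,L\})$.

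Next I would handle the additive $m\alpha$ term. The idea is the same as in the proof of Lemma \ref{lem22}: consider an instance that contains one job of length $\alpha$ (the remaining jobs, if any, contributing the rest of $L$). Until that single long job is confirmed on the channel, no reliable algorithm may halt any machine, since halting all machines before completion would violate reliability condition~1; and a job of length $\alpha$ needs at least $\alpha$ rounds of actual processing by whichever machine(s) work on it, even under preemption — preemption only allows the $\alpha$ tasks to be split among machines, not performed faster than one task per round per machine. During those $\Omega(\alpha)$ rounds the $m$ operational machines each accrue work, giving $\Omega(m\alpha)$. Here I would be careful to note that this argument does not even require the adversary to crash anything, so it composes with the previous bound by taking the worst of the relevant hard instances (or, more carefully, by noting each bound holds on its own hard instance and the theorem asserts existence of an adversary forcing the maximum, which is $\Omega$ of the sum since these are all additive lower-bound terms).

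Finally, I would assemble the pieces: for arbitrary $L$, $\alpha$, $f$ with $0 \le f < m$, the adaptive $f$-bounded adversary — which subsumes the non-adaptive one and which, for deterministic algorithms, can simulate any centralized strategy (as remarked in the adversarial model section) — can choose whichever of the above hard instances maximizes the forced work, yielding $L + \Omega(m\sqrt{L} + m\min\{f,L\} + m\alpha)$. The statement about randomized and centralized algorithms follows because all three ingredient lemmas (\ref{lem21}, \ref{lem22}, \ref{lem24}) are already stated for possibly randomized algorithms, and centralization only removes the communication constraint, which none of the lower-bound arguments rely on — they rely purely on work accounting and the reliability requirement.

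The main obstacle I anticipate is making the reduction from arbitrary-length jobs to unit tasks fully rigorous in the preemptive setting: one must verify that the adversarial strategy behind Lemma \ref{lem24}, when transplanted to the task-level instance, still forces the same work without being undermined by the algorithm's ability to migrate partial progress between machines. I expect this to go through — the counting arguments in \cite{CKL} are about how many task-confirmations the channel can carry per round and how crashes erase scheduled work, both of which are insensitive to whether the unit items come from one job or many — but it is the step that needs the most care, together with checking that the separate hard instances can be unified (or the theorem's "can force" phrasing interpreted) so that the additive $m\alpha$, $m\sqrt{L}$, and $m\min\{f,L\}$ terms all appear simultaneously in the bound.
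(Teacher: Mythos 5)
Your proposal is correct and matches the paper's approach: the paper proves Theorem~\ref{thm22} by combining Lemma~\ref{lem22} with Lemma~\ref{lem24} applied for $n=L$, on exactly the grounds you give --- that in the preemptive model tasks may be treated as independent unit jobs. Your elaboration (including the caveat about verifying the reduction survives migration of partial progress) is more detailed than the paper's one-sentence proof, but the decomposition and key lemmas are the same.
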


\begin{proof}
 The result follows explicitly by combining Lemma~\ref{lem22} and Lemma~\ref{lem24} applied for $ n = L $, as in the preemptive model tasks may be treated as independent jobs.
\end{proof}

\subsection{Algorithm \algx}
\label{sec:preempt_alg}

In this section we present our algorithm Scaling-Triangle MAC Scheduling (\algx\ for short).
Intuitively, we use \tpbb\ which treats the capability of performing jobs as a triangle (see Section~\ref{sec:tpbb}). 
This capability may vary through the execution (because of machine crashes and job completion),
and effectively this is like rescaling the triangle.
The most important assumption for \algx\ is that machines have access to all the jobs and the corresponding tasks that build those jobs.
This means that they know all the jobs (and tasks) and their id's as well as their lengths.

In what follows we assume that each machine maintains three lists: \texttt{MACHINES}, \texttt{TASKS} and \texttt{JOBS}. The first list is a list of operational machines and is updated according to the
information broadcasted through the communication channel. 
If there is information that some machine was recognized as crashed, then this machine is removed from the list. In the context of the \tpbb\ 
procedure, recall that this is realized as the output information. \tpbb\ returns the list of crashed machines so that operational machines may update their lists.

\begin{figure}[htb]
 \begin{center}
 \includegraphics[scale=0.8]{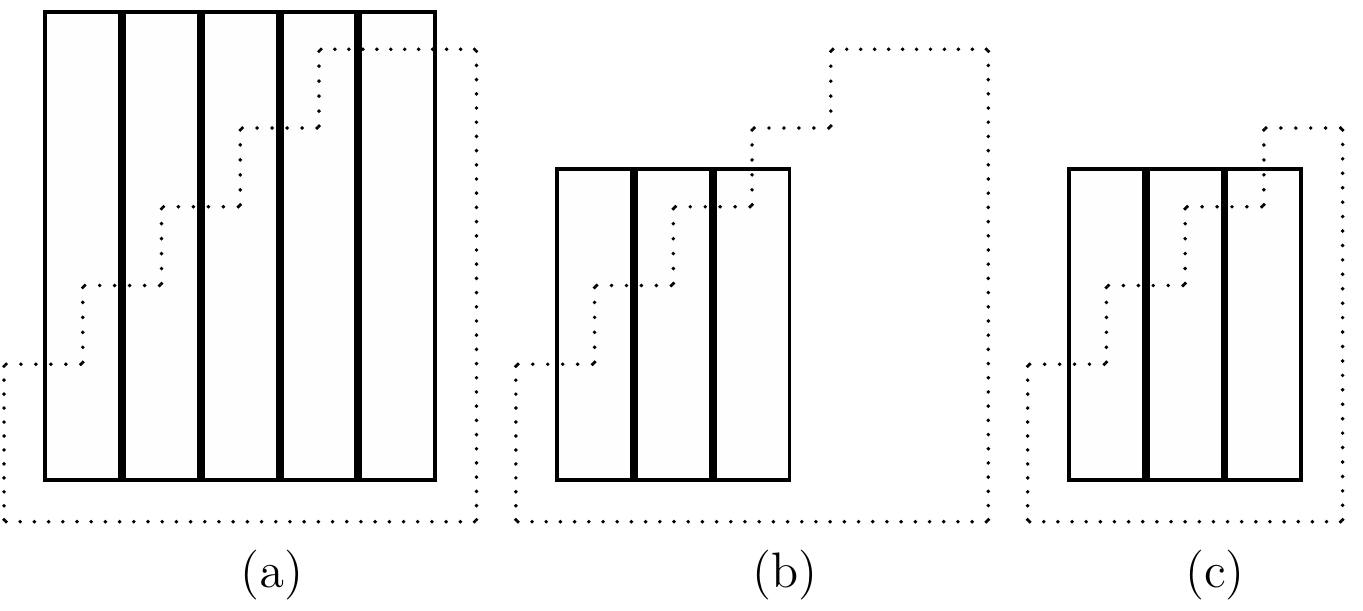}
 \caption{A very general idea about how \algx\ works. The vertical stripes represent jobs and the dotted-line triangle is the capability of the algorithm to perform jobs 
 in a single epoch (or parts of them  i.e. some consecutive tasks). Consequently, if there are enough jobs to pack into an epoch, then it is executed (a). 
 Otherwise, if there are not enough jobs for the current length of an epoch (b), then the length of the epoch is reduced (c).
 This helps preventing excessive idle work.}
 \label{fig0}
 \end{center}
\end{figure}

List \texttt{TASKS} represents all the tasks that are initially computed from the list of jobs. Every task has its unique identifier, what allows to discover \pw{to which job it belongs and its
position in that job}. This allows to preserve consistency and coherency: task~$ k $ cannot be performed before task~$ k - 1 $.
If some tasks are performed then this fact is also updated on the list.

List \texttt{JOBS} represents the set of jobs --- it is a convenient way to know what the consecutive parts of input are for the \tpbb\ procedure. Pieces of information 
on list \texttt{JOBS} may be updated strictly from the ones from list \texttt{TASKS}. \tpbb\ also returns information about which jobs were actually done.
\pw{To ease the discussion}, we denote by $ |\texttt{XYZ}| $ the length of the list XYZ and $ M = |\texttt{MACHINES}| $ the actual number of operational machines.

Jobs are assigned to each machine at the beginning of an epoch (\tpbb\ execution). We assume that machines have instant access to the lengths of jobs. 
While introducing the \tpbb\ procedure we already mentioned the triangle-shaped capacity of the procedure to perform jobs in a single execution. 
Hence, providing \jm{a subset of jobs as the input is consistent with packing them into such a triangle}, \jm{cf., Figure \ref{fig0}}.

We assume~$ d $ is a parameter describing the actual length of an epoch. Initially it is set to $ m $, but it may be reduced while the algorithm is running\jm{, cf., Figure \ref{fig0} (b), (c)}.
In what follows assuming that the length of the epoch $ d $ is set to $ m/2^{i} $ for some $ i $, and we need to fill in a triangle  of size $ \sum_{j=1}^{\frac{m}{2^{i}}} j $,
initially we need to provide $ m/2^{i} $ jobs, that will form the base of the triangle. Jobs are provided as the input in such a way that the shortest ones are preferred and if there are several 
jobs with the same length, then those with lower id's are preferred, cf.,. Figure \ref{fig03}. After having the base filled, it is necessary to look whether there are any gaps in the triangle, 
\jm{see Figure \ref{fig03} (b)}. 
If so, 
another layer of jobs is placed on top of the base layer, and the procedure is repeated\jm{, see Figure \ref{fig03} (c), (d)}. Otherwise, we are done and ready to execute \tpbb.

\begin{figure}[htb]
 \begin{center}
 \includegraphics[scale=0.70]{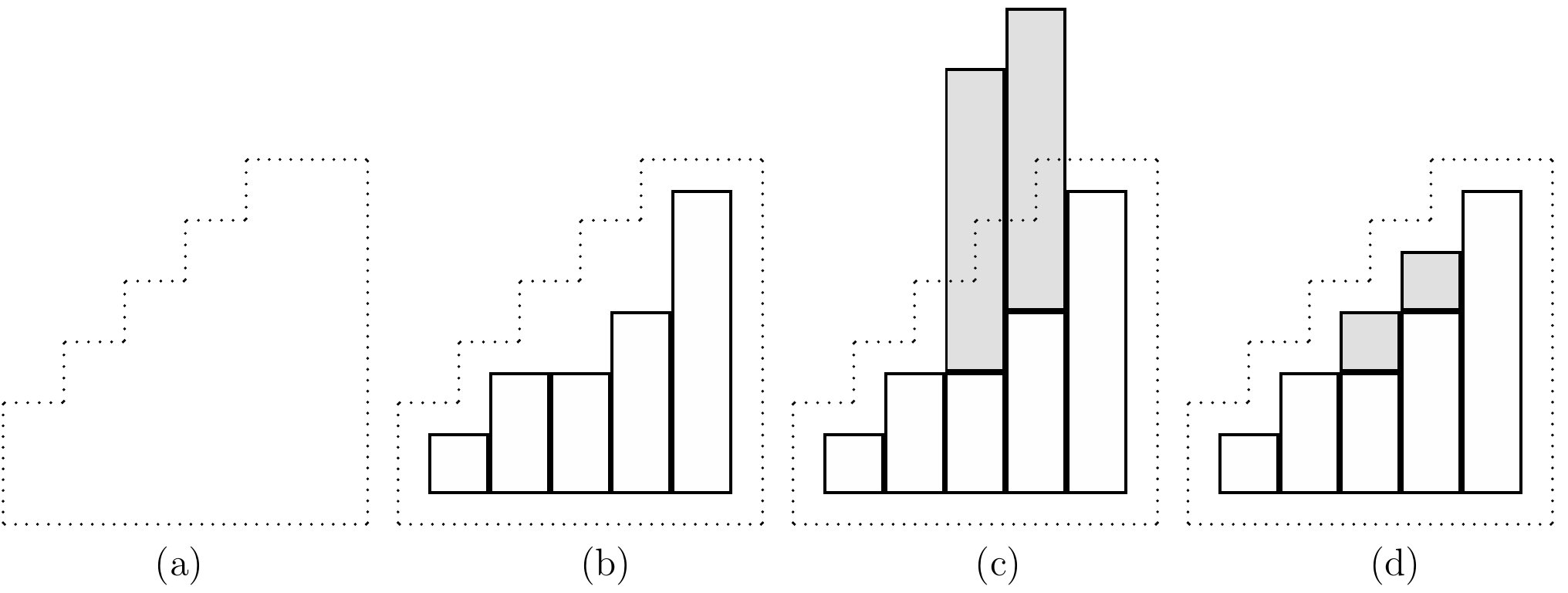}
 \caption{An illustration of job input. (a) Initially there is a certain capability of \tpbb\ (triangle size) for performing jobs. (b) The initial layer is filled with jobs (white blocks), 
 yet there are still some gaps. (c) An additional layer (gray blocks) is placed to fill in the triangle entirely. (d) In fact those additional jobs will only be done partially.}
 \label{fig03}
 \end{center}
\end{figure}

Such an approach allows to ``trim'' longer jobs preferably --- these will have more tasks completed after executing \tpbb\, than shorter ones.
One should observe that performing a transmission is an opportunity to confirm on the channel that the tasks that were assigned to a machine are done. Additionally this confirms that a certain machine is still
operational. In what follows these two types of aggregated pieces of information: which jobs were done, and which machines were crashed, are eventually provided as the output of \tpbb.

When $ d = m/2^{i} $ for some $ i = 1, \dots, \log m $  it may happen that there are not enough tasks (jobs) to fill in a maximal triangle. 
If this happens we will, in some cases, reduce the job-schedule triangle (and simultaneously the length of the epoch) to $ m/2^{i+1} $ and try to fill in a smaller triangle\jm{, cf., Figure \ref{fig0} (b), (c)}. 

Finally, we use $ \tpbb(v, d, \texttt{JOBS}, \texttt{MACHINES}, \phi) $ to denote which machine executes the procedure, the size of the schedule and the length of the epoch,
the list of jobs from which the input will be provided, the list of operational machines, and the phase duration. We emphasize that we described the process of assigning jobs to machines
from the algorithm perspective, i.e., \jm{we illustrated that the system} needs to provide input to \tpbb. Nevertheless, for the sake of clarity we assume that \tpbb\ collects the appropriate
input by itself according to the rules described above, \jm{after having lists \texttt{JOBS} and \texttt{MACHINES} provided as the input.}
\pw{Figures~\ref{fig0} and~\ref{fig03} illustrate the idea of \algx.}

\begin{algorithm}
{
{- initialize $\texttt{MACHINES}$ to a sorted list of all $ m $ names of machines\;}
{- initialize $\texttt{JOBS}$ to a sorted list of all $ n $ names of jobs\;}
{- initialize $\texttt{TASKS} $ to a sorted list of all tasks according to the information from $\texttt{JOBS}$\;}
{- initialize variable $ d $ representing the length of an epoch\;}
{- initialize variable $\phi := 1 $ representing the length of a phase\;}
{- initialize $ i := 0 $\;}
{\Repeat{$|\texttt{JOBS}| > 0$}{
{$ d := m/2^{i}$\;}
{\If{$ |\texttt{JOBS}| \geq  d(d + 1)/2 $}{
{execute $ \tpbb(v, d, \texttt{JOBS}, \texttt{MACHINES}, \phi)$\;}
{update \texttt{JOBS}, \texttt{MACHINES}, \texttt{TASKS} according to the output information from \tpbb\;}
}}
{\Else{ $ i := i + 1 $\;}}
}}
}

\caption{\algx; code for machine $ v $}
\label{alg21}
\end{algorithm}

\jm{The following theorem summarizes the work performance of \algx:}


\begin{theorem} \label{thm21}
 \algx\ performs work $ L + \mathcal{O}(m\sqrt{L} + m \min\{f, L\} + m\alpha)$ against the adaptive adversary 
 and a set of non-oblivious, arbitrary length jobs with preemption.
\end{theorem}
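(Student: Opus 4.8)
The plan is to bound the total work accrued by \algx\ by partitioning the execution into epochs and classifying those epochs by whether the job-schedule triangle was ``full'' (line: $|\texttt{JOBS}| \geq d(d+1)/2$, so \tpbb\ is executed) or whether $d$ was halved. Let me first handle the full epochs. By Fact~\ref{fact11}, a single execution of \tpbb\ with parameter $d$ and $\phi = 1$ generates at most $md$ work while performing at least $\sum_{j=1}^d j = d(d+1)/2$ tasks. I would split the full epochs into two groups: those that complete without any crash interfering (``clean'' epochs) and those disrupted by a crash.

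For the clean epochs, the key accounting trick is that each such epoch performs $\Theta(d^2)$ tasks for $\Theta(md)$ work, i.e., the work-to-progress ratio is $\Theta(m/d)$. Since $d$ only decreases over the course of the algorithm (it is reset to $m/2^i$ with $i$ non-decreasing), and since \tpbb\ loads the shortest available jobs first into the base of the triangle, I would argue that an epoch of length $d$ that runs cleanly must be ``almost saturated'': it performs $\Omega(d^2)$ genuinely new task-units of progress (the only waste comes from the topmost partial layer and from jobs shorter than $d$, but those short jobs are entirely completed so their work is charged against $L$). Summing the geometric-like sequence of clean-epoch costs, the dominant clean epochs are the early long ones; a standard geometric-series bound gives total clean work $L + \cO(m\sqrt{L})$ — the $m\sqrt{L}$ term arising exactly as in Lemma~\ref{lem21}/\cite{CKL} because the number of ``transmission opportunities'' needed to confirm all of $L$ tasks under the triangle schedule is $\Theta(\sqrt{L})$ rounds times $m$ machines. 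The $m\alpha$ term must be isolated separately: the single longest job has length $\alpha$, and in the worst case it sits near the apex of many triangles and gets trimmed a little at a time, so I would charge an extra $\cO(m\alpha)$ to cover the rounds spent finishing it off once it is the only (or one of few) long jobs left — this is where the $\alpha$ in the bound comes from, matching the lower bound of Theorem~\ref{thm22}.

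For the disrupted full epochs: each crash caused by the adversary can spoil at most one epoch's progress, and a spoiled epoch of length $d$ costs $\cO(md)$ work $\leq \cO(m \cdot m) = \cO(m^2)$ in the worst early case, but more carefully, since a machine crash is detected and the machine removed, we can charge the wasted work of a disrupted epoch to the $f$ available crashes. The subtlety is that an epoch of length $d$ near $m$ is expensive, so naively $f$ crashes cost $\cO(fm^2)$, which is too much. The resolution — and I expect this to be the main obstacle — is to argue that the adversary cannot repeatedly force \emph{long} epochs to be re-run: after a disrupted long epoch, either substantial progress was still confirmed (tasks completed before the crashing machine's slot, or by other machines) or the number of operational machines $M$ has dropped, and the algorithm's epoch length is tied to $m$ not $M$, so I would need a careful amortized argument showing the total disrupted work is $\cO(m\min\{f,L\} + m\sqrt{L})$. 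One clean way: each disrupted epoch still confirms $\Omega(d^2) - \cO(d \cdot (\text{number of crashes in it}))$ tasks, so globally the ``lost'' task-progress summed over all disrupted epochs is $\cO(f \cdot \max d) = \cO(fm)$ task-units, each re-done at work ratio $\cO(m/d)$... this needs the observation that re-doing lost work happens at a \emph{smaller} $d$ (monotonicity), hence cheaper ratio, OR is bounded by $\min\{f,L\}$ because you cannot lose more than $L$ total work and you cannot be disrupted more than $f$ times effectively — yielding the $m\min\{f,L\}$ term.

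Finally, the halving steps (the \textbf{else} branch, $i := i+1$): these contribute \emph{zero} work since no \tpbb\ is executed and halving happens at most $\log m$ times between any two executed epochs; moreover once $d$ is halved it never increases, so there are at most $\cO(\log m)$ halving steps in total interleaved with the executed epochs, contributing nothing to the work bound. Assembling the three pieces — clean full epochs ($L + \cO(m\sqrt{L} + m\alpha)$), disrupted full epochs ($\cO(m\min\{f,L\} + m\sqrt{L})$), and halving steps ($0$) — gives the claimed $L + \cO(m\sqrt{L} + m\min\{f,L\} + m\alpha)$. The reliability conditions (all jobs done, all machines halt) follow because the \textbf{repeat} loop exits only when $|\texttt{JOBS}| = 0$, $d$ bottoms out at $1$ where a singleton triangle always admits the remaining job, and \tpbb\ is itself reliable by assumption.
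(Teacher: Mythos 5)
Your overall decomposition—full epochs split into clean and disrupted, plus shrinking steps—is compatible with the paper's three-way case analysis (enough jobs and tasks; few jobs but many tasks; few tasks), and your geometric-series handling of idle work and the $L$ and $m\sqrt L$ terms tracks the paper's reasoning. But there is a genuine gap in how you account for work wasted by crashes, and it sits exactly where you yourself flag the ``main obstacle.''

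You reason that a disrupted epoch of length $d$ costs $\cO(md)$ wasted work, so $f$ crashes give a naive $\cO(fm^2)$, and you then try to salvage this with an amortized argument about re-doing lost progress at smaller $d$. That premise is wrong, and the amortization never lands on the needed bound. In \tpbb\ (i.e., \textsc{Two-Lists}) each machine owns a disjoint segment of tasks and broadcasts independently; when machine $v$ crashes, only $v$'s \emph{own} unconfirmed work is lost—at most one epoch's worth from that single machine, i.e.\ at most $d\phi \le m$. The other machines in the same epoch either confirm their tasks (that work is charged to $L$) or are idling (charged to the idle budget). So the wasted work per crash is $\cO(m)$, not $\cO(md)$, and summing over at most $\min\{f,L\}$ effective crashes directly yields $\cO(m\min\{f,L\})$ with no amortization needed. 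Your proposal, as written, never states this machine-level (rather than epoch-level) accounting, and the replacement amortized argument you sketch (``re-doing lost work happens at a smaller $d$'') conflates the wasted work in the crashed epoch with the productive work of re-doing those tasks later; only the former needs bounding, and it is bounded pointwise.

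A secondary, smaller issue is the $m\alpha$ term. You fold it into the clean-epoch discussion (``sits near the apex of many triangles and gets trimmed a little at a time''), whereas the argument is cleaner if you isolate the regime $|\texttt{JOBS}| < d$ while $|\texttt{TASKS}| \ge d(d+1)/2$: few machines work on few long jobs while the rest idle, and the total duration of that regime is bounded by $\alpha$, giving $\cO(m\alpha)$ outright. Your version risks double-counting with the geometric idle-work bound and does not make clear why the charge is $m\alpha$ rather than something like $m\log m \cdot \alpha$ over the many shrinking stages.
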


\begin{proof}

\jm{The main part of \algx\ is designed as a repeat loop. Therefore, the total contribution to work of \algx\ is divided between three types of iterations of that loop, which we analyze separately:}

\begin{itemize}
 \item Case 1: \jm{iterations in which} $ |\texttt{JOBS}| \geq d $ and $ |\texttt{TASKS}| \geq  d(d + 1)/2 $
 
 If the algorithm reaches such a condition, it means that there is a significant number of jobs and tasks, thus we may assure that there is neither redundant nor idle work. 
 That is why work accrued in phases with successful transmissions accounts to $ L $ and work in phases with failing transmissions \pw{can be computed as} $ m\min\{f, L\} $. Precisely, there are 
 clearly $ f $ crashes that may occur resulting from the adversary  and because the longest length of an epoch is $ m $, then the  amount of wasted work for each crash is at most $ m $. 
 On the other hand the number of crashes cannot exceed the overall number of tasks, hence the minimum factor.
 
 \item Case 2: \jm{iterations in which} $ |\texttt{JOBS}| < d $ and $ |\texttt{TASKS}| \geq  d(d + 1)/2 $
 
 If we fall into such case, there is a significant number of tasks, but they all form few long jobs. This means that the size of an epoch has to be reduced significantly and those jobs need to be trimmed
 by few initial machines in a certain number of epochs. Thus the magnitude of work for such case is $ \mathcal{O}(m\alpha) $.
 
 \item Case 3: \jm{iterations in which} $|\texttt{TASKS}| <  d(d + 1)/2 $
 
 The algorithm begins with the length of the epoch set as $ m $.
 When the number of tasks is relatively small (no matter what their distribution over jobs is) this means that the algorithm should reduce the size of the schedule triangle to $ m/2 $ in order
 to reduce the number of idling rounds of machines that do not have their jobs assigned. Of course the productive part of work after rescaling the triangle, generated by machines that transmitted successfully is 
 accounted to $ L $ and the wasted part produces $ m\min\{f, L\} $ work.
 
 We now compute the idling work of the remaining machines. The ratio between the initial triangle size and the reduced one is 
 
 $$ \frac{\frac{\frac{m}{2}(\frac{m}{2} + 1)}{2}}{\frac{m(m + 1)}{2}} = \frac{1}{4}\;\frac{m + 2}{m + 1} \geq \frac{1}{4} \ , $$
 hence we conclude that $ \mathcal{O}(1) $ epochs is enough for a single case.

 
 Having in mind that we are computing the overall work in the considered case as a union bound, we examine the inequality $ m(m+1)/2 > L $. It states that there are not enough jobs to fill in 
 the initial triangle --- the size of the triangle is therefore reduced to $ m/2 $. But because its size is reduced to $ m/2 $ it means that now \pw{the number of idling phases within this case is bounded by $ m/2 = \mathcal{O}(\sqrt{L})$.}
 Together with the fact that the number of operational machines is at most $ m $ and that the number 
 of epochs sufficing to perform all the jobs within a single case is constant we compute the total work as follows.
 Following Fact \ref{fact11} we have that $ \mathcal{O}(1) $ epochs $ \times $  $ m $ machines idling at most for $ \mathcal{O}(\sqrt{L}) $ phases of an epoch is the idling
 work accrued in a single case. 
 
 Because there is a logarithmic number of such cases and the reasoning for each of them is consistent with the abovementioned (there are not enough jobs to fill in the triangle of size $ m/2^{i} $, hence
 the size is reduced to $ m/2^{i+1} $), basing once again on Fact \ref{fact11} we compute the total idle work as a union bound of all the cases, having the following:
 
 $$ cm \sum_{i} \frac{\sqrt{L}}{2^{i}} = \mathcal{O}(m\sqrt{L}) \enspace, $$
 where $ c $ is the constant number of epochs required to perform everything within a single case.
 
\end{itemize}

Note that we consider four types of work: idling, wasted, the one following from a bottleneck scenario (when there are few very long jobs) and productive.
Idling work has been bounded by the union of several cases when the length of an epoch is reduced, yet there are still some operational machines that may generate idle work. 

Wasted work is a result of adversary and is clearly related to crashes that may occur. However,
if some task~$ k $ is scheduled to be done in an epoch by machine $ v $, then no other machine apart from $ v $ has task $ k $ scheduled in this particular epoch. In what follows there are two
possibilities: either task~$ k $ is done by~$ v $, and the generated productive work is accounted to $ L $ (only once for task $ k $ in the entire execution),
or machine~$ v $ is crashed and the work accrued by~$ v $ is categorized as wasted (factor $ m \min\{f, L\} $) and $ k $ is scheduled once again in the following epoch.

Consequently, considering all the cases we have that \algx\ has $ L + \mathcal{O}(m\sqrt{L} + m \min\{f, L\} + m\alpha)$ work complexity against an adaptive adversary. 
\end{proof}

As mentioned in the beginning of Section~\ref{sec21},
the lower bound here is proved for oblivious jobs in the preemptive model, while the algorithm works reliably for non-oblivious jobs in the same model,
because \tpbb\ procedure provides any intermediate job performing progress as the output information and all the information is updated sequentially after each epoch.
This implies that the distinction between oblivious and non-oblivious jobs in the preemptive model 
and against 
an adaptive adversary does not matter, thus we finish this section with the following corollary:

\begin{corollary} \label{cor22}
\algx\ is optimal in asymptotic work efficiency for 
jobs with arbitrary lengths with preemption for both oblivious and non-oblivious jobs.
\end{corollary}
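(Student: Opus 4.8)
The plan is simply to sandwich the work of \algx\ between the lower bound of Theorem~\ref{thm22} and the upper bound of Theorem~\ref{thm21}, once the direction of the reductions between the oblivious and non-oblivious settings is made explicit. First I would recall that Theorem~\ref{thm22} already proves, against the adaptive $f$-bounded adversary, a lower bound $L + \Omega(m\sqrt{L} + m\min\{f,L\} + m\alpha)$ for \emph{oblivious} jobs of arbitrary length with preemption. The key observation is that a lower bound for oblivious jobs is a fortiori a lower bound for non-oblivious jobs: an execution over oblivious jobs can be simulated in the non-oblivious model by additionally broadcasting the intermediate progress whenever a job is interrupted, so the adversary's obstruction strategy carries over and no work is saved. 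Hence the same $\Omega(\cdot)$ overhead lower bound holds in \emph{both} the oblivious and non-oblivious preemptive settings.

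Second, I would invoke Theorem~\ref{thm21}, which shows that \algx\ performs work $L + \mathcal{O}(m\sqrt{L} + m\min\{f,L\} + m\alpha)$ against the adaptive adversary on non-oblivious jobs of arbitrary length with preemption. Since \algx\ already copes with the harder non-oblivious model—its use of \tpbb\ emits all intermediate task-completion information as output after each epoch and updates the lists \texttt{JOBS}, \texttt{TASKS}, \texttt{MACHINES} accordingly—the very same algorithm, with the same work bound, also solves the oblivious case (one may either keep the extra reporting, which is harmless for correctness and already accounted for in the analysis, or regard every oblivious job as a non-oblivious one). Thus the $\mathcal{O}(\cdot)$ overhead upper bound holds in both settings as well.

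Finally, I would match the two bounds termwise: the additive $L$ is common, and $\Theta(m\sqrt{L} + m\min\{f,L\} + m\alpha)$ is exactly the overhead achieved by \algx\ and forced by the adversary. This establishes asymptotic optimality of \algx\ with respect to the overhead, uniformly over oblivious and non-oblivious jobs, which is the statement of the corollary. I expect no real obstacle here beyond stating carefully the two opposite monotonicity facts—lower bounds transfer from the easier (oblivious) model upward to the harder (non-oblivious) model, while an algorithm for the harder model restricts downward to the easier one—both of which follow directly from the remark preceding the corollary; everything else is a one-line combination of Theorems~\ref{thm21} and~\ref{thm22}.
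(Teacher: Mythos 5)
Your proposal is correct and follows essentially the same reasoning the paper gives in the paragraph preceding the corollary: the lower bound of Theorem~\ref{thm22} is stated for oblivious jobs and hence carries over to the more constrained non-oblivious setting, while the upper bound of Theorem~\ref{thm21} is stated for non-oblivious jobs and hence applies to the less constrained oblivious setting, so \algx\ matches the lower bound in both cases. The only cosmetic difference is that you phrase the lower-bound transfer as a simulation argument (add gratuitous progress reports), whereas the cleanest view is simply that any non-oblivious algorithm is a fortiori a valid oblivious algorithm, so a lower bound over all oblivious algorithms already covers all non-oblivious ones; both framings yield the same conclusion.
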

\section{Non-preemptive model}
\label{sec31}

In this section we consider a complementary problem of performing jobs when preemption is not available, thus every job needs to be performed in one attempt by a single machine in order to be confirmed.
Again, we will begin with the lower bound and then proceed to the algorithm and its analysis.

\subsection{Lower bound}

The non-preemptive model is characterized by the fact that each job can be performed by a machine in one piece. \jm{This reflects an all-or-nothing policy, i.e.,}
a machine cannot make any intermediate progress while performing a job.
If it starts working on a job then either it must be done entirely or it is abandoned without any tasks \jm{actually} performed. 

In what follows any intermediate broadcasts while performing a job are not helpful for the system. The only meaningful transmissions are those which allow to confirm certain jobs being done. This is the key
reason to consider the notion of a phase, introduced in the model section ---  broadcasts regarding confirming jobs are worth doing only after the entire jobs are done.
We start
the proof of the lower bound by
estimating how many jobs we may be able to confirm in a certain period
based on the average length of a job. Precisely, we will show that the number of jobs with length twice the average does not exceed half of the total number of jobs.

\begin{fact} \label{fact31}
 Let $ n $ be the number of jobs, $ L $ be the sum of all the lengths of jobs and let $ \frac{L}{n} $ represent the average length of a job. 
 Then we have that $ \left|\left\{a: \ell_a > 2\frac{L}{n}\right\}\right| \leq \frac{n}{2}$.
\end{fact}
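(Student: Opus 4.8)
The plan is to give a short averaging (Markov-type) argument. Let $S = \{a : \ell_a > 2\frac{L}{n}\}$ and let $k = |S|$. Since every job in $S$ contributes more than $2\frac{L}{n}$ to the total length, and all lengths are nonnegative, we have
\[
L = \sum_a \ell_a \;\ge\; \sum_{a \in S} \ell_a \;>\; k \cdot 2\frac{L}{n}.
\]
Dividing both sides by $2\frac{L}{n}$ (which is positive, since $L>0$ in any nontrivial instance) yields $k < \frac{n}{2}$, which in particular gives $k \le \frac{n}{2}$, as claimed.

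The only thing to be a little careful about is the degenerate case $L = 0$ (no tasks at all), where $\frac{L}{n}$ is $0$ and the strict inequality $\ell_a > 0$ defines the empty set, so $k = 0 \le \frac{n}{2}$ holds trivially; I would mention this in one clause so the division step is justified. There is no real obstacle here — the statement is essentially Markov's inequality applied to the uniform distribution over jobs, and the proof is two lines. I would present it exactly as above, perhaps noting explicitly that we only used nonnegativity of the remaining lengths $\sum_{a \notin S} \ell_a \ge 0$ to drop them from the sum.
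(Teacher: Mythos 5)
Your proof is correct and is essentially the same averaging argument the paper uses: bound $L$ from below by the contribution of the jobs longer than $2L/n$ and divide. The extra remark about the degenerate case $L=0$ is a small refinement the paper omits, but the approach is identical.
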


\begin{proof}
 We have that $ L = \sum_{i \leq n} \ell_i $, so
 $$ L = \sum_{i \leq n} \ell_i \geq \sum_{\left\{a: \ell_i > 2\frac{L}{n}\right\}} \ell_i \geq 2\frac{L}{n} \left| \left\{ a: \ell_a > 2\frac{L}{n}  \right\} \right| \enspace,$$
 hence
 $\frac{n}{2} \geq \left| \left\{ a: \ell_a > 2\frac{L}{n}  \right\} \right| $.
\end{proof}

We now state a general lower bound for the model even if the adversary does not distract the system.

\begin{lemma} \label{lem31}
A reliable algorithm, possibly randomized, performs work $ L + \Omega(\frac{L}{n}m\sqrt{n}) $ in an execution in which no failures occur. 
\end{lemma}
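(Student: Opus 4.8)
The plan is to reduce the no-failure lower bound for arbitrary-length jobs in the non-preemptive model to the known unit-length lower bound of Lemma~\ref{lem21}, by carefully packaging a hard instance. The idea is to exploit the all-or-nothing nature of non-preemption: here a job of length $\ell_a$ cannot be ``chopped'' the way a preemptive job can, so one cannot simply set $n=L$ as in Theorem~\ref{thm22}. Instead, I would build an adversarial instance in which all $n$ jobs have (roughly) the same length, namely close to the average $L/n$. On such an instance, the system must confirm $n$ whole jobs, and confirming a job requires a successful transmission on the channel after the job has been fully performed; since the channel admits no collision detection, the classical counting argument from~\cite{CKL} applies to the \emph{confirmation schedule} at the granularity of whole jobs, giving $\Omega(n+m\sqrt{n})$ worth of \emph{phases}, each phase lasting $\Theta(L/n)$ rounds because that is how long it takes to perform one job. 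Multiplying work-per-phase by the number of phases yields $\Omega(\frac{L}{n} m\sqrt{n})$, and the trivial $L$ term is the minimum total productive work. Adding these gives the bound $L+\Omega(\frac{L}{n}m\sqrt{n})$.

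Concretely, the steps I would carry out are as follows. First, I would fix the instance: $n$ jobs each of length exactly $L/n$ (assume divisibility, or handle rounding by Fact~\ref{fact31}-style estimates, noting that at least half the jobs have length at most $2L/n$, which is enough to keep the argument within constant factors). Second, I would argue that in the non-preemptive model the only ``useful'' rounds are the ones in which a machine, having spent $L/n$ consecutive rounds on a job, broadcasts a confirmation; call a window of $L/n$ consecutive rounds on a single machine a \emph{slot}. Third, I would set up a correspondence between an execution of a reliable non-preemptive algorithm on this instance and an execution of a reliable algorithm on an instance of $n$ \emph{unit-length} jobs, where each slot corresponds to one unit round: the sequence of ``which machine transmits successfully, and which job it confirms'' is governed by exactly the same information-theoretic constraint (at most one successful transmission per slot, on a channel without collision detection), so Lemma~\ref{lem21} forces $\Omega(n+m\sqrt{n})$ productive-or-idle slots across the $m$ machines. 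Fourth, I would translate slots back to rounds: each slot is $L/n$ rounds and contributes $\Theta(\frac{L}{n})$ to the work of the machine running it (and idle machines accrue work at the same rate), so the $\Omega(m\sqrt n)$ overhead slots become $\Omega(\frac{L}{n} m \sqrt n)$ work. Finally, I would add the unavoidable $L$ term (the total length of all jobs must be performed at least once by some machine) to conclude.

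The main obstacle I anticipate is making the reduction to Lemma~\ref{lem21} fully rigorous rather than merely suggestive. Lemma~\ref{lem21} is a black box about unit-length jobs with one successful transmission per round; here I need to justify that the ``slot'' structure is genuinely forced — i.e., that no reliable non-preemptive algorithm can do better than treating each whole job as an atomic unit confirmed by a single broadcast — and that the lower bound on the \emph{number of slots} really does lower-bound the \emph{work} (one must be careful that machines idling for fractions of a slot, or starting jobs that are later abandoned, cannot be charged at a lower rate; but in the non-preemptive model abandoned jobs produce \emph{no} progress, so any round spent on a job that is not ultimately confirmed by that machine is wasted, which only helps the lower bound). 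A secondary subtlety is the equal-length assumption: a worst-case instance for the algorithm need not have equal lengths, so I would either argue that the adversary gets to \emph{choose} the instance (it does, as this is a lower bound), or invoke Fact~\ref{fact31} to show that even on a general instance a constant fraction of the ``length budget'' sits in jobs of length $\Theta(L/n)$, which suffices. I expect the cleanest writeup to simply pick the equal-length instance and run the counting argument of~\cite{CKL} verbatim at the job granularity, with $L/n$ as the cost of a phase.
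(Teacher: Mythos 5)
Your high-level instinct --- work at the granularity of the average job length $L/n$, and note that confirmation-counting at that granularity should yield the $m\sqrt n$ overhead --- matches what the paper does, and your worry at the end is exactly where the proposal stops short. But the paper does \emph{not} carry out a black-box reduction to Lemma~\ref{lem21}: instead it re-runs the core argument of~\cite{CKL} directly at ``phase'' granularity. It assumes the algorithm's phase parameter $\phi$ equals $L/n$ and then re-derives the two ingredients of the unit-length proof: (i) a counting bound saying that at most $O(\gamma)$ jobs can be confirmed during phase $\gamma$ (because a machine transmitting at phase $\gamma$ has had time to complete at most $\gamma$ whole jobs), hence $O((\gamma')^2)$ jobs confirmed by phase $\gamma'$; and (ii) an adversarial ``threat'' claim --- by failing, in a hypothetical execution $\mathcal{E}_2$, every machine that ever completes an unconfirmed job $b$, one shows that in the failure-free execution $\mathcal{E}_1$ the first-halting machine $v$ must itself have performed every job not yet confirmed on the channel when it halts. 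Combining these gives $v$'s work $\Omega\bigl(\frac{L}{n}\sqrt{n'}+\frac{L}{n}(n-n')\bigr)=\Omega(\frac{L}{n}\sqrt n)$, and since all machines run at least as long as $v$, the total is $\Omega(\frac{L}{n}m\sqrt n)$.

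The concrete gap in your reduction is the claim that a ``slot'' of $L/n$ rounds admits ``at most one successful transmission.'' That is false: the model allows a successful broadcast every \emph{round}, so a slot admits up to $L/n$ of them, and an algorithm is free to transmit off the slot grid. So the sequence of successful broadcasts in the non-preemptive execution is not information-theoretically the same object as the sequence of rounds in a unit-job execution, and Lemma~\ref{lem21} cannot be invoked as a black box on slots. What saves the paper's version is that its counting bounds the number of \emph{jobs confirmed}, not transmissions: no matter how often a machine transmits, by phase $\gamma$ it has completed at most $\gamma$ jobs, because non-preemption forces each completed job to cost a contiguous $L/n$ rounds on one machine. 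If you want to salvage the reduction style, you would need to prove precisely this counting lemma anyway --- at which point you are essentially reproving Lemma~\ref{lem21} at phase granularity rather than invoking it, which is what the paper does. You would also need to make explicit the adversarial claim about the first machine to halt; it is what couples the confirmation count to the work of the whole system, and it is not something you get for free by citing the unit-length bound.
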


\begin{proof}
It is sufficient to consider the channel with collision detection in which machines could possibly have more information.
Let $ \mathcal{A} $ be a reliable algorithm. We also assume that $ \mathcal{A} $ has the phase parameter set to the average length of the set of jobs.
This is the duration between two consecutive transmissions.

The part $ L $ of the bound follows from the fact that all the jobs are built from $ L $ tasks, thus all the tasks need to be done at least once
in any execution of $ \mathcal{A} $. In other words, there are at least $ L $ units of work needed in order to perform them all.
 
Job $ a $ is \textit{confirmed} at phase $ \gamma $ of an execution of algorithm $ \mathcal{A} $, if either a machine broadcasts successfully and it has performed $ a $ by the end of phase $ \gamma $, or
 at least two machines broadcast simultaneously and all of them, with a possible exception of one machine, have performed job $ a $ by the end of phase $ \gamma $ of the execution.
 All of the machines broadcasting at phase $ \gamma $ and confirming job $ a $ have performed it by then, so at most $ 2\gamma $ jobs can be confirmed at phase $ \gamma $.
 

Let $ \mathcal{E}_{1} $ be an execution of the algorithm when no failures occur. Let~$ v $ be the machine that comes to a halt at some phase $ \gamma' $ in $ \mathcal{E}_{1} $.
 
\noindent
\begin{claim*}
The jobs not confirmed by the end of phase~$ \gamma' $ were performed by $ v $ itself in $ \mathcal{E}_{1} $.
\end{claim*}

\begin{proof}
Suppose, to the contrary, that this is not the case, and let $ b $ be such a job.
Consider an execution, say $ \mathcal{E}_{2} $, obtained by running the algorithm and
crashing  any machine that performed job $ j $ in $ \mathcal{E}_{1} $ just before the machine finishes performing job $ b $ in $ \mathcal{E}_{1} $, and all the remaining machines, except for $ v $,
crashed at phase $ \gamma' $.
The broadcasts on the channel are the same during the first $ \gamma' $ phases in $ \mathcal{E}_{1} $ and $ \mathcal{E}_{2} $.
Hence all the machines perform the same jobs in $ \mathcal{E}_{1} $ and $ \mathcal{E}_{2} $ till the beginning of phase $ \gamma' $.
The definition of $\mathcal{E}_{2} $ is consistent with the power of the \textit{unbounded} adversary.
The algorithm is not reliable because job $ b $ is not performed in 
$ \mathcal{E}_{2} $ and machine $ v $ is operational.
We obtain contradiction that justifies the claim.
\end{proof}

We \pw{compute} the contribution of the machine $ v $ to work.
The total number of jobs confirmed in $ \mathcal{E}_{1} $ is at most 
\[
2(1+2+\ldots+\gamma')=\mathcal{O}((\gamma')^2)\ .
\]
Suppose some $ n' $ jobs have been confirmed by the end of phase $ \gamma' $. Each job contributes $ \frac{L}{n} $ units of work because of the phase parameter.
The remaining $ n-n' $ jobs have been performed by $ v $ and each of them required $ \frac{L}{n} $ units of work to be done as well.
The work of $ v $ is therefore 
\[
\Omega\left(\frac{L}{n}\sqrt{n'}+\frac{L}{n}(n-n')\right)=\Omega\left(\frac{L}{n}\sqrt{n}\right)\ ,
\]
which completes the proof.
\end{proof}

\begin{lemma} \label{lem32}
 A reliable, distributed algorithm, possibly randomized,
performs work 
 $ L + \Omega\left(\frac{L}{n}m\sqrt{n} + \frac{L}{n} m\min\{f, n\} + m\alpha \right)$ 
in an execution
with at most $ f $ failures against an adaptive adversary.
\end{lemma}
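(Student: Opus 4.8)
The plan is to combine three separate lower-bound contributions — $L$, the parallel term $\frac{L}{n}m\sqrt{n}$, the crash-induced term $\frac{L}{n}m\min\{f,n\}$, and the bottleneck term $m\alpha$ — by exhibiting, for each, an adversarial strategy and an execution in which the claimed amount of work is unavoidable, and then arguing that these can be combined (up to constant factors) into a single execution. The term $L$ is immediate: every one of the $L$ unit tasks composing the jobs must be performed at least once, contributing $L$ units of work regardless of failures. The term $L + \Omega(\frac{L}{n}m\sqrt{n})$ is already established in Lemma~\ref{lem31} for the failure-free execution, so I would invoke it directly; the only subtlety is that Lemma~\ref{lem31} is a statement about the work of a single machine $v$, but since all $m$ machines run until the system halts (a halted machine that stops ``too early'' forces the surviving machines to redo its unconfirmed jobs, exactly as in the Claim in Lemma~\ref{lem31}), the bound scales by $m$.

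For the crash term $\frac{L}{n}m\min\{f,n\}$, I would adapt the standard argument (cf.\ Lemma~\ref{lem24}, i.e.\ \cite{CKL}, Theorem~2) to the non-preemptive, arbitrary-length setting. The adversary plays as follows: in each phase (of length set to the average job length $\frac{L}{n}$, as in the proof of Lemma~\ref{lem31}) at most $2\gamma$ jobs can be confirmed at phase $\gamma$; whenever a machine is about to successfully broadcast a confirmation, the adversary crashes it just before the broadcast, wasting that machine's $\Theta(\frac{L}{n})$ units of work spent on the (now unconfirmed) job and forcing the job to be redone. This can be sustained for $\min\{f,n\}$ crashes, and since $\Omega(m)$ machines remain operational throughout (as $f < m$), each such ``wasted'' phase costs $\Omega(m \cdot \frac{L}{n})$ work; summing over $\min\{f,n\}$ crashes gives $\Omega(\frac{L}{n} m \min\{f,n\})$. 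Here Fact~\ref{fact31} is useful to argue that a constant fraction of jobs have length $O(\frac{L}{n})$, so that the adversary has enough ``cheap'' jobs to keep the confirmation bottleneck active without the per-job cost exploding; I would also need to check that the crashing of the confirming machine is consistent with the adaptive $f$-bounded adversary, which it is by definition. The bottleneck term $m\alpha$ follows as in Lemma~\ref{lem22}: there is an instance containing one job of length $\alpha$, and in a failure-free execution all $m$ machines remain operational until that job is confirmed, which takes at least $\alpha$ rounds, for $m\alpha$ work.

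Finally, to merge the four contributions into one execution rather than four separate ones, I would take the instance to be a mix: one job of length $\alpha$ together with $\Theta(n)$ short jobs of length $\Theta(\frac{L}{n})$ totalling $\Theta(L)$, and have the adversary run the crash strategy above on the short jobs while the long job sits unconfirmed; since work contributions are additive and each sub-argument only uses that $\Omega(m)$ machines are alive, the bounds add, giving $L + \Omega(\frac{L}{n}m\sqrt{n} + \frac{L}{n}m\min\{f,n\} + m\alpha)$. The main obstacle I anticipate is the bookkeeping needed to show that these adversarial strategies do not interfere — in particular, ensuring that the phase length $\frac{L}{n}$ is simultaneously the ``right'' granularity for the parallelism argument (Lemma~\ref{lem31}), for the crash-wasting argument, and compatible with the presence of the length-$\alpha$ job — and arguing carefully that a machine halting early cannot escape the charge, which requires the indistinguishability/Claim-style argument from Lemma~\ref{lem31} to be re-run in the combined execution against the unbounded-in-spirit but $f$-bounded adversary.
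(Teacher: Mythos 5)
Your high-level decomposition matches the paper's: the $L$, $\frac{L}{n}m\sqrt{n}$, and $m\alpha$ contributions are handled by Lemma~\ref{lem31} together with Lemma~\ref{lem22}, exactly as the paper does (the paper simply splits into cases according to which term dominates, which is also why your worry about ``merging the four contributions into one execution'' is unnecessary — a lower bound of the form $\Omega(A+B+C)$ is $\Omega(\max\{A,B,C\})$, so separate hard executions suffice).

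The genuine gap is in the $\Omega(\frac{L}{n}m\min\{f,n\})$ term. You describe the right adversarial strategy (crash a machine that is about to broadcast as the single transmitter in a phase, for roughly $\min\{f,n\}$ phases), but the bound then hinges on the assertion that ``$\Omega(m)$ machines remain operational throughout,'' which you justify only by $f<m$. That only rules out \emph{crashes}; it does not rule out the $\Omega(m)$ surviving machines \emph{halting voluntarily} after a few phases, which would collapse the claimed $\Omega(m\cdot\frac{L}{n}\cdot\min\{f,n\})$ work. You flag this at the end (``a machine halting early cannot escape the charge'') but misdiagnose it as a re-run of the Claim in Lemma~\ref{lem31}. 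The claim needed here is different and strictly harder: one must show that \emph{no} machine may halt within the first $\Theta(\min\{f,n\})$ phases of such an execution, not merely that the earliest-halting machine must have done all outstanding jobs itself. The paper's proof does this via a three-execution indistinguishability construction: $\cE_1$ (crash any lone would-be broadcaster for $g/4$ phases, $g=\min\{f,n\}$), then, toward a contradiction assuming some $v$ halts, $\cE_2$ (an unbounded adversary crashes at time zero a set $C \supseteq A\cup B$, where $A$ are the $\cE_1$-crashed machines and $B$ are the few machines that perform a particular thinly-covered job $a$, and $C$ is grown by one machine per phase to keep the channel silent), and $\cE_3$ (like $\cE_2$ but all but $v$ die at phase $g/4$). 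The cardinality bookkeeping that guarantees $|C|<m$ — using $g\le n$ to bound $|B|\le mg/(3n)\le m/3$, $|A|\le g/4$, and at most $g/4\le m/4$ additions — is exactly the content that makes the argument go through, and none of it is present (or obtainable by transplanting Lemma~\ref{lem31}'s single-execution claim) in your proposal. Without it, the central term of the lower bound is unproved.

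One further small point: the machines surviving to phase $g/4$ are \emph{non-halted}, not merely non-crashed, and the phase length $\phi$ may change over the execution; the paper explicitly argues that $\phi$ is $\Omega(L/n)$ on average rather than treating it as fixed. Your sketch fixes $\phi=L/n$ by fiat, which is not something the lower bound can assume about an arbitrary algorithm.
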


\begin{proof}
We consider two cases, depending on which term dominates the bound.
If it is $ L + \Omega\left(\frac{L}{n} m\sqrt{n} + m\alpha\right)$, then the bound
follows from combining Lemma~\ref{lem31} with Lemma~\ref{lem22} --- if the longest job was a bottleneck in the preemptive model, then certainly it is a bottleneck in the non-preemptive model.

Consider the case  when $\Omega(\frac{L}{n} m \min \{ f,n \})$ determines the magnitude of the bound.
Denote $g=\min\{f,n\}$. As for now, we may assume that the duration of a phase, i.e., the duration between consecutive broadcasts, is set to some value $ \phi $ and that it may even
change multiple times during any execution.
Let $\cE_1$ be an execution obtained by running the algorithm
and crashing any machine that wants to broadcast as a single one
during the first $g/4$ phases.
Denote as $A$ the set of machines failed in $\cE_1$.
The definition of $\cE_1$ is consistent with the power of the \fba\ adversary, since $|A|\le g/4\le f$.

\begin{claim*}
No machine halts by phase $g/4$ in execution $\cE_1$.
\end{claim*}

\begin{proof}
Suppose, to the contrary, that some machine $v$ halts before phase $g/4$
in $\cE_1$.
We show that the algorithm is not reliable.
To this end we consider another execution, say, $\cE_2$ that
can be made to happen by the \ua\ adversary.
Let $a$ be a job which is performed in $\cE_1$ by at most
\[
\frac{mg}{4(n-g/4)}\le \frac{mg}{3n}
\]
machines, except for machine~$v$, during the first $g/4$ phases.
It exists because $g\le n$.
Let~$B$ be this set of machines. 
Size $|B|$ of set~$B$ satisfies the inequality 
\[
|B|\le \frac{mg}{3n}\le \frac{m}{3}\ .
\]
We define operationally a set of machines, denoted $C$, as follows.
Initially $C$ equals $A\cup B$.
Notice that the inequality $|A\cup B|\le 7m/12$ holds.
If there is any machine that wants to broadcast during the first $g/4$
phases in $\cE_1$ as the only machine not in the current $C$,
then it is added to $C$.
At most one machine is added to $C$ for each among the first $g/4\le m/4$
phases of $\cE_1$, so $|C|\le 10m/12 <m$. 

Let~$\cE_2$ be an execution obtained by failing all the machines in $C$
at the start and then running the algorithm.
The definition of $\cE_2$ is consistent with the power of  the \ua\ adversary.
There is no broadcast heard in $\cE_2$ during the first $g/4$ phases.
Therefore each machine operational in $\cE_2$ behaves in exactly
the same way in both $\cE_1$ and $\cE_2$ during the first $g/4$ phases.
Job~$a$ is not performed in execution $\cE_2$ by the end of phase $g/4$,
because the machines in~$B$ have been failed and the remaining ones
behave as in $\cE_1$.

Machine $v$ is not failed in~$\cE_2$ and so it performs the same actions in 
both~$\cE_1$ and~$\cE_2$.
Consider a new execution, denoted $\cE_3$.
This execution is like $\cE_2$ till the beginning of phase $g/4$, then all the machines, except for~$v$, are failed.
The definition of $\cE_3$ is consistent with the power of  the \ua\ adversary.
Machine $v$ is operational but halted and job $a$ is still
outstanding in $\cE_3$ at the end of phase $g/4$.
We conclude that the algorithm is not reliable.
This contradiction completes the proof of the claim.
\end{proof}

Consider the original execution $\cE_1$ again.
It follows from the claim that there are at least $m-g/4=\Omega(m)$ machines
still operational and non-halted by the end of phase~$g/4$ in execution~$\cE_1$.
The duration of a phase is dictated by lengths of jobs. Hence even if $ \phi $ was changed $ n $ times, \jm{phases} lasted $ L/n $ on average.
Thus we may bound the magnitude of $ \phi $ by the initial average length of jobs $ L/n $ and consequently machines have generated work $\Omega\left(\frac{L}{n} \cdot m\cdot \min \{f,n\}\right) $ 
by the end of phase $ g/4 $,
what ends the proof.
\end{proof}

\begin{corollary}
\label{cor33}
 There are non-preemptive job inputs for which a reliable, distributed algorithm, possibly randomized, 
performs work
$ L + \Omega\left(\frac{L}{n}m\sqrt{n} + \alpha m\min\{f, n\} \right)$
against adaptive \fba\ adversary. 
\end{corollary}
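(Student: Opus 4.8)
The plan is to prove Corollary~\ref{cor33} as a straightforward strengthening of Lemma~\ref{lem32}, where the only new content is replacing the factor $\frac{L}{n}$ in front of $m\min\{f,n\}$ by the stronger factor $\alpha$ in a suitably chosen family of inputs. First I would observe that $\alpha \ge \frac{L}{n}$ always (the maximum length is at least the average), so a bound with coefficient $\alpha$ is formally stronger than one with coefficient $\frac{L}{n}$; the claim is that there exist inputs for which this stronger bound is actually forced. The term $L + \Omega(\frac{L}{n}m\sqrt{n})$ is already supplied verbatim by Lemma~\ref{lem31}, so the work is entirely about the crash term $\alpha m \min\{f,n\}$.

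The key step is to pick the input family carefully and re-run the adversarial argument from the proof of Lemma~\ref{lem32}. Concretely I would take an input consisting of one job of length $\alpha$ together with $n-1$ short jobs, arranged (e.g.\ by scaling $\alpha$ relative to $L$) so that the average length $\frac{L}{n}$ is still $\Theta(\alpha)$ up to the precision we need — or, more cleanly, take all $n$ jobs to have length exactly $\alpha$, so that $L = n\alpha$ and $\frac{L}{n} = \alpha$ exactly; then Lemma~\ref{lem32} already gives $L + \Omega(\alpha m\sqrt{n} + \alpha m\min\{f,n\} + m\alpha)$, which contains the desired $\alpha m\min\{f,n\}$ summand, and since every job has length $\alpha \ge \frac{L}{n}$ the bound $L + \Omega(\frac{L}{n}m\sqrt{n} + \alpha m\min\{f,n\})$ follows immediately on this family. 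So the corollary is obtained simply by instantiating Lemma~\ref{lem32} on the equal-length input where $\alpha = \frac{L}{n}$, and noting that in the non-preemptive model no algorithm can exploit the uniformity of lengths to do better, since the lower-bound construction in Lemma~\ref{lem32} is oblivious to whether the jobs are equal or not.

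If instead one wants the corollary to bite on genuinely non-uniform inputs (a single long job of length $\alpha$ among many tiny ones), the argument is the same but one must be more careful: in the claim inside Lemma~\ref{lem32} the phase length $\phi$ was bounded by $\frac{L}{n}$, whereas here I would want to argue that near the \emph{end} of the execution the surviving machines are still being forced to spend $\Theta(\alpha)$ rounds per broadcast because the long job is outstanding. The point is that the adversary, by the crashing strategy of Lemma~\ref{lem32}, keeps $\Omega(m)$ machines alive and non-halted for $\Omega(\min\{f,n\})$ phases, and on the chosen input at least a constant fraction of those phases must have duration $\Omega(\alpha)$ — otherwise the long job could never be confirmed in the non-preemptive model, since confirming it requires a single machine to run uninterrupted for $\alpha$ rounds. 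This yields $\Omega(\alpha m \min\{f,n\})$ work in exactly the same bookkeeping as before.

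The main obstacle I anticipate is the second route: making rigorous the statement that ``a constant fraction of the $\Omega(\min\{f,n\})$ active phases has length $\Omega(\alpha)$.'' One has to rule out the algorithm using one long phase to confirm the big job early and then reverting to short phases, which would concentrate the $\alpha$-cost into a single phase rather than spreading it over $\min\{f,n\}$ phases. I would handle this by having the adversary delay: crash the machine that is poised to finish the long job (this costs at most one extra crash per attempt, absorbed into the $f$ budget since we only use $g/4$ crashes and can afford a constant-factor slack), so that the long job stays outstanding throughout the first $\Theta(\min\{f,n\})$ phases and every such phase, to make any progress toward confirming it, is of length $\Omega(\alpha)$. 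Given the slack already present in the constants of Lemma~\ref{lem32}'s proof, this modification is safe, and the cleanest write-up is almost certainly the first route — instantiate Lemma~\ref{lem32} with equal-length jobs — which makes the corollary essentially immediate.
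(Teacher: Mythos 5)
Your first route is exactly what the paper intends: the corollary is stated as an immediate consequence of Lemma~\ref{lem32}, and instantiating that lemma on an input of $n$ equal-length jobs (so $\alpha = L/n$ and $L = n\alpha$) yields $L + \Omega(\alpha m\sqrt{n} + \alpha m\min\{f,n\} + m\alpha)$, which is precisely the claimed bound on that input family. Your second, more elaborate route for genuinely non-uniform inputs is unnecessary here and you correctly flag it as the less clean option, so the proposal is correct and matches the paper's approach.
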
 
\subsection{Algorithm \algy}

\begin{figure}[thb]
 \begin{center}
 \includegraphics[scale=0.7]{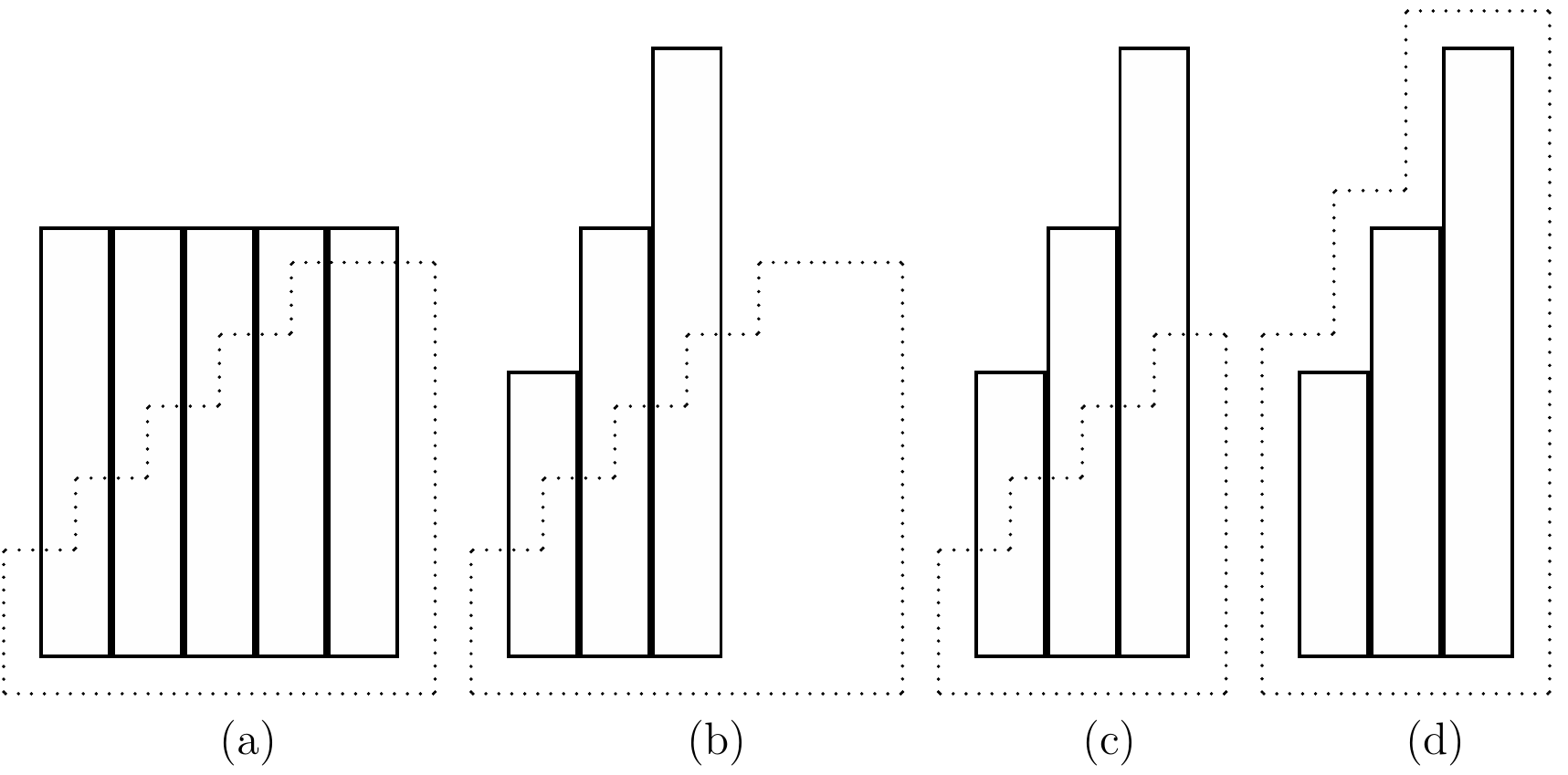}
 \caption{Most important features of \algy.
 Similarly to \algx\, we apply the method of reducing idle work by reducing the input size (epoch length) for the task performing procedure ((a), (b), (c)). 
 Additionally we pump phases (the duration between consecutive broadcasts) in order to be able to fill in jobs entirely - in the model without preemption jobs cannot be done partially (d).}
 \label{fig1}
 \end{center}
\end{figure}

\begin{algorithm}
{
{- initialize $\texttt{MACHINES}$ to a sorted list of all $ m $ names of machines\;}
{- initialize $\texttt{JOBS}$ to a sorted list of all $ n $ names of jobs\;}
{- initialize $\texttt{TASKS} $ to a sorted list of all tasks according to the information from $\texttt{JOBS}$\;}
{- initialize variable $ d $ representing the length of an epoch\;}
{- initialize variable $\phi $ representing the length of a phase\;}
{- initialize $ i := 0 $\;}
{\Repeat{$|\texttt{JOBS}| > 0$}{
{$ d := m/2^{i} $\;}
{\If{$ |\texttt{JOBS}| \geq d(d+1)/2 $}{
{\tcp{set $\phi$ to the current average length of a job}
{$\phi := |\texttt{TASKS}|/|\texttt{JOBS}| $\;}
{execute $ \tpbb(v, d, \texttt{JOBS}, \texttt{MACHINES}, \phi)$\;}
{update \texttt{JOBS}, \texttt{MACHINES}, \texttt{TASKS} according to the output information from \tpbb\;}
}}
{\Else{ 
\If{$|\texttt{TASKS}| < d(d+1)/2 $}{
$ i := i + 1 $\;
}
\Else{
{$\phi := 1 $\;}
{execute $ \tpbb(v, \min\{|\texttt{JOBS}|, |\texttt{MACHINES}|\}, \texttt{JOBS}, \texttt{MACHINES}, \phi)$\;}
{update \texttt{JOBS}, \texttt{MACHINES}, \texttt{TASKS} according to the output information from \tpbb\;}
}
}}
}}
}
}

\caption{\algy; code for machine $ v $}
\label{alg31}
\end{algorithm}

We now design a 
scheduling algorithm 
in the non-preemptive model of jobs. 
Its work complexity is $ L + \mathcal{O}(\frac{L}{n}m\sqrt{n} + \alpha m \min\{f, n\})$ against the adaptive \fba\ adversary.
%
At first glance, algorithm Deforming-Triangle MAC Scheduling (\algy\ for short) 
is similar to \algx\ introduced in the previous section and on the top of its design there is the \tpbb\ procedure for performing jobs.
Roughly speaking, the algorithm, repetitively, tries to choose tasks that could be packed into a specific triangle (parameters of which are controlled by the algorithm) and feed them to \tpbb.
Furthermore, once again the main goal of the algorithm is to avoid redundant or idle work. Hence, the main feature is to examine whether there is an appropriate number of jobs in order to fill in an epoch of \tpbb. However, as we are dealing with the non-preemptive model, jobs cannot be done in small pieces (tasks). 

Consequently, apart from checking the number of jobs (and possibly reducing the size of an epoch)
the algorithm also changes the phase parameter, i.e., the duration between consecutive broadcasts --- this is somehow convenient in order to know how the input jobs should be 
fed into \tpbb.
It settles a ``framework'' (epoch) with ``pumped rounds'' that should be filled in appropriately.

Using Fact~\ref{fact31} assures 
that if the size of an epoch is reduced accordingly to the number of jobs and the phase parameter is set accordingly to the actual average length
of the current set of jobs (i.e., $ |\texttt{TASKS}|/\texttt{JOBS}|$ in the code of Algorithm~\ref{alg31}), then we are able to provide the input for \tpbb\ appropriately and expect that 
at least
$ \sum_{j=1}^{\frac{m}{2^{i}}} j $ jobs will be performed for some size parameter $ i $ as a result of executing \tpbb, without yet taking into account crashes.

Finally, there is a special and quite pessimistic case, that makes a difference between \algx\ and \algy. Specifically, if there are not many jobs left, but they are very long, 
then the algorithm needs to switch to a slightly different \tpbb\ phase parameter $ \phi $, mainly $ \phi = 1 $, so that transmissions are enabled in any given round. 
This is in order not to waste too much idle work
because of the jobs being significantly long for the considered case.
\pw{Figure~\ref{fig1} illustrates the idea of \algy.}

\begin{theorem} \label{thm31}
\algy\
performs work $ L + \mathcal{O}(\frac{L}{n}m\sqrt{n} + \alpha m \min\{f, n\})$ against the adaptive \fba\ adversary. 
\end{theorem}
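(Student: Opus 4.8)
The plan is to mirror the structure of the proof of Theorem~\ref{thm21}, analyzing the repeat loop of \algy\ by the same three types of iterations, but with the phase parameter $\phi$ now playing a central role. I would first partition the work accrued into four categories as before: productive work, wasted work (due to crashes), idle work (machines without assigned jobs during a reduced epoch), and bottleneck work (few very long jobs). The productive work is always exactly $L$, since every task is performed at most once successfully across the whole execution (the argument is identical to Theorem~\ref{thm21}: within an epoch a scheduled job is assigned to a unique machine, so either it is confirmed and counted once in $L$, or the machine crashes and it is rescheduled).

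For Case~1 (iterations with $|\texttt{JOBS}| \ge d(d+1)/2$), the phase is set to the current average $\phi = |\texttt{TASKS}|/|\texttt{JOBS}|$, which by Fact~\ref{fact31} guarantees that a majority of the jobs fit into the triangle of base $d$, so progress is genuinely made; here I would appeal to Fact~\ref{fact11} to bound the work of one epoch by $md\phi$, and argue that each crash wastes at most one epoch's worth of work on the relevant machines, i.e.\ $O(\phi m)$ per crash, with at most $\min\{f,n\}$ crashes, and since $\phi$ never exceeds (roughly) $L/n$ over the course of the algorithm, this contributes $O(\frac{L}{n} m \min\{f,n\})$. The productive part again folds into $L$, and the $O(\frac{L}{n} m \sqrt{n})$ term arises exactly as the geometric-sum bound $cm \sum_i \frac{\sqrt{n}}{2^i} \cdot \frac{L}{n}$ over the $O(\log m)$ scale reductions, just as in the proof of Theorem~\ref{thm21} but with each idle phase now costing $\phi = O(L/n)$ rounds rather than one.

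Case~2 is the bottleneck case ($|\texttt{JOBS}| < d$ but $|\texttt{TASKS}| \ge d(d+1)/2$, handled by the new \textbf{else} branch that sets $\phi := 1$ and runs \tpbb\ with $\min\{|\texttt{JOBS}|,|\texttt{MACHINES}|\}$ machines): here there are only a handful of long jobs, each needing up to $\alpha$ rounds, so the whole system spends $O(m\alpha)$ work trimming — wait, without preemption these are performed whole, so each such job is done in one piece in $O(\alpha)$ rounds with a crash costing at most $O(\alpha m)$ wasted work and at most $\min\{f,n\}$ crashes relevant, giving the $\alpha m \min\{f,n\}$ term; I must be careful here that the $\phi=1$ choice is exactly what keeps the idle work in this regime from blowing up beyond $O(\alpha m)$. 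Case~3 ($|\texttt{TASKS}| < d(d+1)/2$) just increments $i$ and costs nothing extra beyond the scale-reduction bookkeeping already absorbed into the $O(\frac{L}{n} m \sqrt{n})$ term. Summing the three cases yields $L + O(\frac{L}{n} m \sqrt{n} + \alpha m \min\{f,n\})$.

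The main obstacle I anticipate is making rigorous the claim that $\phi$ stays bounded by $O(L/n)$ throughout: as jobs are completed, $|\texttt{TASKS}|$ and $|\texttt{JOBS}|$ both shrink, and one must verify that the running average $|\texttt{TASKS}|/|\texttt{JOBS}|$ never exceeds a constant multiple of the original $L/n$ — this is essentially where Fact~\ref{fact31} does its work, since removing the short jobs preferentially (the packing rule prefers shortest jobs) could in principle inflate the average, and one has to show the reduction-to-smaller-triangle mechanism kicks in before that happens and resets the regime. A secondary subtlety is correctly attributing wasted work to the right regime: a crash in Case~1 wastes $O(\frac{L}{n} m)$ but a crash in Case~2 wastes $O(\alpha m)$, and one needs to check these do not double-count and that the total number of charged crashes across all regimes is still $O(\min\{f,n\})$.
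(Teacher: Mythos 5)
Your proposal follows essentially the same approach as the paper's proof: the same three iteration types (you relabel them slightly, but the substance is identical), the same four work categories with productive work folding to $L$, the same appeal to Fact~\ref{fact31} to justify the phase setting, and the same geometric-sum-over-scales bound for the idle work. The one genuine gap you flag --- that the running average $\phi = |\texttt{TASKS}|/|\texttt{JOBS}|$ could exceed $L/n$ as the packing rule removes short jobs first --- is real, and the paper does not establish a pointwise bound on $\phi$ either. What the paper does instead, inside its Case~3, is a weighted-average argument: writing $L_i$ for the number of outstanding tasks when the $i$-th scale reduction occurs, it observes that each successive average $L_i/(n/2^{i-1})$ is in force for a geometrically shrinking number of phases, so that
\[
\frac{L}{n} \;\ge\; \frac{L/2 + L/4 + \dots + L/2^{\log n}}{n} \;\ge\; \frac{L_1/2 + L_2/4 + \dots + L_{\log n}/2^{\log n}}{n}\ ,
\]
which bounds the total idle work by $\mathcal{O}\bigl(\frac{L}{n}m\sqrt{n}\bigr)$ even though individual $\phi$ values may exceed $L/n$. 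Supplying this weighted-average inequality in place of the pointwise claim is what closes the gap you noticed; with that step made explicit, your argument matches the paper's.
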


\begin{proof}
We analyze similar cases as in the proof of Theorem~\ref{thm21} w.r.t. \algx,
using the same methodology.

\begin{itemize}
 \item Case 1: \jm{iterations in which} $ |\texttt{JOBS}| \geq d $ and $ |\texttt{TASKS}| \geq  d(d + 1)/2 $.
 
 For this case we have a great number of jobs to perform and the only subtlety is to take care of the appropriate phase length, that is adjusted before each epoch. Similarly as in Theorem~\ref{thm21},
 productive work resulting from successful job performance accrues $ L $ units of work. If a machine crashes, its work in the epoch is at most $ \phi m $, which is equal to the maximal 
 length of the epoch rescaled by $ \phi $ that is the phase parameter. Because we have $ f $ possible crashes and $ \phi $ is always adjusted to the current average, 
 then we may \jm{upper bound} this kind of work as $\frac{L}{n} m\min\{f, n\} $, taking into consideration that the number of jobs $ n $ may be less than the adversary's 
 possibilities of crashes.

\item Case 2: \jm{iterations in which} $ |\texttt{JOBS}| < d $ and $ |\texttt{TASKS}| \geq  d(d + 1)/2 $.
 
In this case there are not too many jobs, but they are long. Consequently, \algy\ changes the phase parameter of \tpbb\ to $ 1 $ so that transmissions are available at any given time,
 hence the schedule of transmissions is fixed according to list \texttt{MACHINES} and \texttt{JOBS}. In this case each machine has barely one job to take care of and machines with shorter 
 jobs and/or lower job ID's have the opportunity to confirm their jobs before machines having longer jobs scheduled. Obviously the only transmissions that make sense are those that confirm performing
 a particular job.
 
 The length of an epoch is now set to the number of jobs, unless the number of jobs is greater than the number of currently operational machines --- for such case the length of the epoch is 
 simply $ M \le m$. In this case several epochs may be needed to perform all the jobs.
 
 The execution for such case is likely to be consistent with the scenario where machines will work silently for some time (long jobs) and then start communicating progress one after another.
 Hence if now a machine performs a successful transmission, it means that it performed a job and the work accrued by this fact accounts to factor $ L $. Otherwise, if there was a crash caused by the
 adversary then the maximum number of wasted work units accrued is $ \alpha $. While there are as many as $ f $ crash possibilities, we have that $ \alpha f $ is upper bound on the amount of wasted work.
 
 Finally, if there is a number of machines idling because some others are performing few long jobs, then this process generates the amount of $ \alpha f $ work multiplied by $ m $. 
%
We therefore conclude that the overall amount of work accrued for the considered case is $ m\alpha \min\{f, n\} $, given the sub-case that the number of jobs is less than the number of crashes.
 
\item Case 3: \jm{iterations in which} $|\texttt{TASKS}| < d(d+1)/2 $.
 
When there are not enough jobs to provide the input to \tpbb, then the size of a single epoch of \tpbb\ is reduced in order to prevent excessive idle or redundant work. Obviously, the productive and wasted
 work is $ L $ and $ \frac{L}{n} m\min\{f, n\} $ respectively, 
\dk{and it remains to bound the amount of idle work.}

Likewise in the proof of Theorem~\ref{thm21}, we use a union bound over a logarithmic number of cases for the idle kind of work.
Assume that the length of the epoch is set to $\frac{m}{2^{i}} $, for some $ i $, and there are not enough tasks to fully fill the input of \tpbb, hence the length of the epoch is reduced with parameter $ i+1 $.
The ratio between both of these cases is 
 
 $$ \frac{\frac{\frac{m}{2^{i+1}}(\frac{m}{2^{i+1}} + 1)}{2}}{\frac{\frac{m}{2^{i}}(\frac{m}{2^{i}} + 1)}{2}} = \frac{1}{4}\;\frac{m + 2^{i+1}}{m + 2^{i}} \geq \frac{1}{4} \ , $$
 thus we conclude that the reduced epoch is at most $ 4 $ times smaller than the initial one. In what follows, the number of reduced length epochs sufficing to perform all the tasks is constant.
 What is more, the phase parameter will be set appropriately to the average length of the current set of tasks, hence, according to Fact \ref{fact31}, this will be a correct setting to fill in the input for \tpbb.
 
 Taking these facts altogether and a similar reasoning as in Theorem~\ref{thm21} over the logarithmic number of cases, we have that 
 $$ cm \sum_{i} \frac{\sqrt{n}}{2^{i}} = \mathcal{O}(m\sqrt{n}) $$
 phases of work are needed, where $ c $ is the constant number of epochs required for performing all the jobs. Once again: we have a sum over a logarithmic number of cases characterized by the epoch length, where
 at most $ m $ machines are operational, and they are working for $ c $ executions of \tpbb, hence the estimate above.
 
 The final question is however how we can bound the duration of all the phases from above for the idle work estimation.

 Once again, we calculate a union bound over a logarithmic number of cases. \jm{We use $ L_{i} $ for denoting the number of outstanding jobs in consecutive cases.}
 The initial arithmetic average length of a job is $ L_{1}/n $, where $ L_{1} = L $. It is set for the time of performing at most $ n/2 $ jobs, 
 because if we fell into the condition that $|\texttt{TASKS}| < d(d+1)/2 $,
 then at most two epochs suffice to perform all the jobs. And because we change the average length of a phase after each epoch, hence we argued that the initial average is set for $ n/2 $ jobs.
 
 The next step is that we fall into the case when an average of $ L_{2}/(n/2) $ is set for at most $ n/4 $ jobs, and other cases follow similarly. $ $
 The key observation is that the consecutive average phase durations are set for monotonically decreasing periods, where each one is set for at most half the ``time'' than the preceding one.
 Of course $ L = L_{1} \geq L_{2} \geq \dots \geq L_{\log n} $. In what follows we have that the initial average length
 of a job dominates the weighted average of all the averages that occur over all the cases:
 
 $$ \frac{L}{n} \geq \frac{L/2 + L/4 + \dots + L/2^{\log n}}{n} \geq \frac{L_{1}/2 + L_{2}/4 + \dots + L_{\log n}/2^{\log n}}{n} \enspace. $$
 We therefore showed that the idle work factor equals $\mathcal{O}\left(\frac{L}{n}m\sqrt{n}\right)$.
\end{itemize}

As in Theorem~\ref{thm21}, the analysis of \algy\ consists of idle, wasted and productive work. Likewise, if a job is performed and it has been confirmed, the work accounts to $ L $. Otherwise,
it accounts to $ \alpha m\min\{f, L\} $. In what follows, factor $ L $ 
appears in the complexity formula with constant $ 1 $, similarly as it was in \algx\ analysis.
We conclude that \algy\ has $ L + \mathcal{O}(\frac{L}{n}m\sqrt{n} + \alpha m \min\{ f, L\})$ work complexity.
%
\end{proof}

\remove{
\noindent We finish this section with a general corollary about the optimality of \algy\ for the considered problem.

\begin{corollary} \label{cor31}
 Algorithm Y is optimal in asymptotic work efficiency for the channel without collision detection and jobs with different lengths without preemption.
\end{corollary}
}
\section{Towards randomized solutions} \label{sec41}

We \pw{have already showed} that preemption is helpful when considering arbitrary lengths of jobs that need to be done by a distributed system. However, a natural question appears whether we may 
perform even better when randomization is available for the considered problem and its corresponding model. The answer seemed to be positive for the non-adaptive adversary,
hence we investigate this issue by \jm{adapting} an algorithm from \cite{KKM2017} and analyzing it for the configuration focusing on jobs with preemption, 
providing an improved algorithm, intuitions about it and a detailed analysis.

\subsection{Algorithm \algz}

\begin{algorithm}
\If{$ m^{2} \leq L $}{execute \algx\;}
\Else{
{initialize $ \texttt{MACHINES} $ to a sorted list of all $ m $ names of machines\;}
{initialize variable $\phi := 1 $ representing the length of a phase\;}
\If{$\log_{2}(m) > e^{\frac{\sqrt{L}}{32}}$}
{execute a $ L $-phase epoch where every machine has all tasks assigned (without transmissions)\;
execute \textsc{Confirm-Work}\;}
\Else{

{$ i = 0 $\;}
\Repeat{$ i = \lceil \log_{2}(m) \rceil $}{

\If{($ \frac{m}{2^{i}} \leq \sqrt{L} $)}
{Execute \algx\;}
\If{\textsc{Mix-And-Test($ i, L, m $)}}{
\Repeat{less than $ \frac{1}{4}\sqrt{L} $ broadcasts are heard}{
Execute $ \tpbb(v, \sqrt{T}, \texttt{TASKS}, \texttt{STATIONS}, \phi)$\;}
}
{increment $ i $ by $ 1 $\;}
}
}
}
\caption{\algz, code for machine $ v $}
\label{algorithm42}
\end{algorithm}

\begin{algorithm}
\KwIn{$i, L, m$}
{$\texttt{coin} := \frac{m}{2^{i}}$\;}
{\For{$\sqrt{L} + \log m$ times}
{
{\If{$v$ has not been moved to the front of list $ \texttt{MACHINES} $ yet}{toss a coin with the probability $ \texttt{coin}^{-1} $ of heads to come up}}
{\If{heads came up in the previous step}{broadcast $ v $ via the channel and attempt to receive a message}}
{\If{some machine $ w $ was heard}{
{move machine $ w $ to the front of list $ \texttt{MACHINES} $\;}
{decrement $ \texttt{coin}$ by $ 1 $\;}
}}
}}
{\If{at least $ \sqrt{L} $ broadcasts were heard}{return \texttt{true}\;}
\Else{return \texttt{false}\;}
}

\caption{\mat, code for machine $ v $ }
\label{algorithm41}
\end{algorithm}

\begin{algorithm}
{$i := 0$ \;}
\Repeat{a broadcast was heard}{

{$\texttt{coin} := \frac{m}{2^{i}}$ \;}
{toss a coin with the probability $ \texttt{coin}^{-1} $ of heads to come up\;}
\If{heads came up in the previous step}{broadcast $ v $ via the channel and attempt to receive a message\;}
\If{some machine $ w $ was heard}{
clear list $\texttt{JOBS}$\;
break\;
}
\Else{
  increment $ i $ by $ 1 $ \;
  \If{ $ i = \lceil \log_{2}(m) \rceil + 1 $ }
  {$ i := 0 $\;}
}
}

\caption{\conwor, code for machine $ v $}
\label{algorithm43}
\end{algorithm}

The name of \algz\ follows from \algx\ and that in fact the actual job performing remains the same. What distinguishes both algorithms is the use of randomization.
Precisely, \algz\ changes the order of machines on list $ \texttt{MACHINES} $ by moving machines that performed a succesful broadcast to front of that list.
This has dual significance - on one hand it prevents flexible adversarial crashes but on the other it also allows to predict to which interval $n \in(\frac{m}{2^{i}}, \frac{m}{2^{i-1}}] $ 
for $ i = 1, \cdots, \lceil\log_{2}(m)\rceil $ does the current number of operational machines belong what helps us with the analysis.

Machines moved to front of list \texttt{MACHINES} are called \textit{leaders}. Because they are chosen randomly, we expect that they will be uniformly distributed in the
adversary order of crashes. This allows us to assume that a crash of a \textit{leader} is likely to be preceeded by several crashes of other machines.

Consider \mat\ at first. If $ p $ is the previously predicted number of operational machines, then each of the machines tosses a coin with the probability of success equal
$ 1/p $. When only one machine did successfully broadcast it is moved to front of list $ \texttt{MACHINES} $ and the procedure starts again with a decreased parameter. 
Otherwise, when none or more than one of the machines broadcasts then silence is heard on the channel, as there is no collision detection. Machines that have already been moved to front do not take part
in the following iterations of the procedure.

\tpbb\ together with \algx\ are used as subprocedures for \algz. The length parameter for \tpbb\ is now set to $ \sqrt{L} $, as this allows us to be sure 
that the total work accrued in each epoch does not exceed $ m\sqrt{L}\phi $ and simultaneously it is enough to perform all the tasks in a constant number of epochs.

\jm{Let us recall that} before the execution the \textit{Non-Adaptive} adversary has to choose $ f $ machines prone to crashes and declare in which rounds they will be crashed what is in fact consistent with declaring 
in what order those crashes may happen. As a result when some of the \textit{leaders} are crashed, we may be approaching the case when $ M \leq \sqrt{L} $ 
and it is possible to execute \algx\ that complexity is $ L + m\alpha $ for such parameters. Otherwise the adversary spends the majority of his possible crashes and the machines may finish 
all the jobs without being distracted.

The design of \algz\ is quite simple: if some specific conditions are not satisfied (see Algorithm \ref{algorithm42} lines 1-10), 
then \textsc{Mix-And-Text} is executed which purpose is to select a number of leaders. Because they are working for $ \sqrt{L} $ rounds then, if no crashes occur, all the jobs will be done 
in a constant number of epochs ($ \sqrt{L} $ machines perform $ 1 + \dots + \sqrt{L} = \mathcal{O}(L) $ tasks in a single epoch). In case of crashes occurring we may need to repeat
\mat\ and draw a new set of leaders. Whenever $ M < \sqrt{L} $ is satisfied, \algx\ is executed that work complexity is $ \mathcal{O}(L + m\alpha) $ for such parameters.

%
%
%
%
%
%
%
%
%

We begin the analysis od \algz\ with the fact stating that whenever $ \frac{m}{2^{i}}\leq \sqrt{L} $ holds, \algx\ is executed, with $ \mathcal{O}(L + m\alpha) $
work.

\begin{fact}
 Let $ m $ be the number of operational machines, and $ L $ be the number of outstanding tasks. Then for $ m \leq \sqrt{L} $ \algx\ work complexity is $ \mathcal{O}(L + m\alpha) $.
\label{fact43}
 \end{fact}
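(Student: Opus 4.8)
The plan is to derive Fact~\ref{fact43} directly from the worst-case bound on \algx\ established in Theorem~\ref{thm21}, by specializing it to the regime $m\le\sqrt{L}$. First I would recall that Theorem~\ref{thm21} gives that \algx\ run on $m$ (operational) machines with a job set of total length $L$ and longest job $\alpha$ performs work $L+\mathcal{O}(m\sqrt{L}+m\min\{f,L\}+m\alpha)$ against the adaptive \fba\ adversary. Here the relevant ``$m$'' is the number of machines actually handed to \algx; when \algx\ is invoked as a subprocedure inside \algz\ on a shrunken machine list, one applies Theorem~\ref{thm21} with that reduced count, which is exactly the quantity denoted $m$ in the statement of Fact~\ref{fact43}.

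The key estimates are then two one-line bounds on the overhead terms under the hypothesis $m\le\sqrt{L}$. First, $m\sqrt{L}\le\sqrt{L}\cdot\sqrt{L}=L$. Second, since the adversary satisfies $f\le m-1<m$, we have $\min\{f,L\}\le f<m$, hence $m\min\{f,L\}<m^{2}\le(\sqrt{L})^{2}=L$. Substituting both into the bound of Theorem~\ref{thm21} collapses the first three summands into $\mathcal{O}(L)$, leaving $L+\mathcal{O}(L+m\alpha)=\mathcal{O}(L+m\alpha)$, which is the claimed complexity.

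I do not anticipate a genuine obstacle here: the argument is pure substitution into an already-proved theorem, and correctness of \algx\ carries over verbatim. The only point deserving a sentence of care is the bookkeeping remark above---that $m$ in Fact~\ref{fact43} refers to the (possibly reduced) number of machines passed to \algx\ in that invocation, and that the bound $f\le m-1$ used to control $m\min\{f,L\}$ still holds for this reduced instance because at least one machine remains operational throughout. With that noted, the chain $m\sqrt{L}\le L$, $m\min\{f,L\}\le m^{2}\le L$ finishes the proof.
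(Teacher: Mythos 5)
Your proof is correct and follows essentially the same route as the paper's: both specialize Theorem~\ref{thm21} to the regime $m\le\sqrt{L}$ and use $f<m$ to collapse $m\sqrt{L}$ and $m\min\{f,L\}$ into $\mathcal{O}(L)$. The paper substitutes $\sqrt{L}$ for $m$ and $\sqrt{L}$ for $f$ before simplifying, while you bound $m\min\{f,L\}<m^2\le L$ directly, but the substance is identical.
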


\begin{proof}
If $ m \leq \sqrt{L} $, then the outstanding number of crashes is $ f < m $, hence $ f < \sqrt{L} $.
\algx\ has $L + \mathcal{O}(m\sqrt{L} + m\:\min\{f, L\} + m\alpha) $ work complexity.
In what follows the complexity is $ L + \mathcal{O}(\sqrt{L}\sqrt{L} + \sqrt{L}\:\min\{\sqrt{L}, L\} + m\alpha) $
$ = \mathcal{O}(L + m\alpha) $.
\end{proof}

%

\begin{lemma}
Assume that we have $ M $ operational machines at the beginning of an epoch, where $ \sqrt{L} $ were chosen as leaders.
 If the adversary crashes $ M/2 $ machines, then the probability that there were $ 3/4 $ of the overall number of leaders crashed in this group does not exceed
 $ e^{-\frac{1}{8} \sqrt{L}} $.
 \label{lem42}
\end{lemma}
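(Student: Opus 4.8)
The plan is to treat the group of $M/2$ crashed machines as a uniformly random subset (or a subset chosen by the non-adaptive adversary in advance, hence independent of the random choice of leaders) and apply a concentration bound on the hypergeometric-type random variable counting how many of the $\sqrt{L}$ leaders fall into this group. First I would set up the probabilistic model carefully: the $\sqrt{L}$ leaders are selected by the randomized procedure \mat\ essentially uniformly among the $M$ operational machines, and because the adversary is \emph{non-adaptive} it must commit to the identity and timing of its crashes before seeing the coin tosses; therefore the event ``machine $w$ is among the $M/2$ crashed in this epoch'' is independent of ``$w$ is a leader.'' Consequently the number $X$ of crashed leaders is a sum of (negatively associated) indicator variables with $\mathbb{E}[X] = \sqrt{L}\cdot\frac{M/2}{M} = \frac{1}{2}\sqrt{L}$.

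Next I would invoke a Chernoff-type tail bound valid for negatively associated variables (or equivalently a Chernoff bound for the hypergeometric distribution, which stochastically is dominated appropriately by the binomial). We want $\Pr[X \ge \frac{3}{4}\sqrt{L}]$, i.e.\ a deviation of $\frac{1}{4}\sqrt{L} = \frac{1}{2}\mathbb{E}[X]$ above the mean, so $\delta = 1/2$. The standard upper-tail Chernoff bound gives
\[
\Pr\!\left[X \ge (1+\delta)\mathbb{E}[X]\right] \le \exp\!\left(-\frac{\delta^2}{3}\,\mathbb{E}[X]\right) = \exp\!\left(-\frac{1}{12}\cdot\frac{1}{2}\sqrt{L}\right),
\]
which is already $o\!\left(e^{-\frac{1}{8}\sqrt{L}}\right)$ only if the constant works out; to get exactly the stated $e^{-\frac{1}{8}\sqrt{L}}$ I would instead use the sharper form $\Pr[X \ge (1+\delta)\mu] \le \exp(-\mu((1+\delta)\ln(1+\delta) - \delta))$ with $\mu = \frac{1}{2}\sqrt{L}$ and $\delta = \frac{1}{2}$, which yields $\exp\!\left(-\frac{1}{2}\sqrt{L}\cdot(\frac{3}{2}\ln\frac{3}{2} - \frac{1}{2})\right)$; since $\frac{3}{2}\ln\frac{3}{2} - \frac{1}{2} \approx 0.108 > \frac{1}{4}$, a small numerical check confirms the exponent is at least $\frac{1}{8}\sqrt{L}$ in absolute value, giving the claimed bound. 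I would present this numerical verification as a one-line remark rather than belaboring it.

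The main obstacle, and the step deserving the most care, is justifying the independence/negative-association claim rigorously: one must argue that \mat\ really does produce a leader set whose intersection with any fixed adversary-chosen subset behaves like a uniform random sample, despite the sequential coin-tossing with an adaptively decremented \texttt{coin} parameter and the ``move to front'' dynamics. The cleanest route is to observe that, conditioned on exactly $\sqrt{L}$ leaders being selected, by symmetry of the coin tosses every $\sqrt{L}$-subset of the $M$ machines is equally likely, so the leader set is a uniform random $\sqrt{L}$-subset; then negative association of the indicators follows from standard facts about uniform sampling without replacement, and the Chernoff bound applies. I would also need to note that since the epoch in question has $M$ operational machines with $M > \sqrt{L}$ (otherwise \algx\ is run directly by Fact~\ref{fact43}), the quantity $\sqrt{L}$ is a legitimate leader count, i.e.\ $\sqrt{L} \le M$, so the sampling is well-defined. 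With these pieces in place the lemma follows.
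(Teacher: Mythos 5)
Your conceptual setup matches the paper's: the number of crashed leaders is treated as a hypergeometric random variable (a sample of size $\sqrt{L}$ drawn without replacement from a population of $M$ machines of which $M/2$ are ``bad''), the non-adaptivity of the adversary guarantees the crash set is fixed before the random leader selection, and a concentration inequality closes the argument. The paper invokes precisely this framing, citing Hoeffding's tail bound for the hypergeometric distribution: with $p = K/N = 1/2$, $l = \sqrt{L}$, and $t = 1/4$,
\[
\mathbb{P}[X \geq (p+t)l] \leq e^{-2t^2 l} = e^{-2\cdot\frac{1}{16}\cdot\sqrt{L}} = e^{-\frac{1}{8}\sqrt{L}}.
\]

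However, your numerical check is wrong, and the gap is not cosmetic. You write that $\frac{3}{2}\ln\frac{3}{2} - \frac{1}{2} \approx 0.108 > \frac{1}{4}$, but in fact $0.108 < \frac{1}{4}$. Tracing through: the multiplicative Chernoff bound with $\mu = \frac{1}{2}\sqrt{L}$ and $\delta = \frac{1}{2}$ gives an exponent of $\mu\bigl((1+\delta)\ln(1+\delta) - \delta\bigr) \approx \frac{1}{2}\sqrt{L}\cdot 0.108 \approx 0.054\sqrt{L}$, which is strictly weaker than the required $\frac{1}{8}\sqrt{L} = 0.125\sqrt{L}$. Your first attempt with $e^{-\delta^2\mu/3}$ gives $e^{-\sqrt{L}/24}$, also too weak. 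In this parameter regime the multiplicative Chernoff bound simply does not reach the stated constant; you need the Hoeffding additive bound, which is stated directly in terms of the sample size $l$ rather than the mean $\mu$ and is tighter here because $l = 2\mu$. Replace the Chernoff step with Hoeffding's inequality for the hypergeometric (as the paper does, via the cited reference), and your argument goes through. The discussion of negative association and the observation that non-adaptivity forces the crash set to be independent of the leader selection are both valid and are, if anything, more carefully articulated than in the paper's terse treatment.
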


\begin{proof}
 We have $ M $ machines, among which $ \sqrt{L} $ are leaders.
 The adversary crashes $ M/2 $ machines and \pw{the question is how many leaders were in this group.}
 
 The hypergeometric distribution function with parameters $ N $ - number of elements, $ K $ - number of highlighted elements, $ l $ - number of trials, $ k $ - number of successes, is given by:
 
 $$ \mathbb{P}[X = k] = \frac{\binom{K}{k}\binom{N-K}{l-k}}{\binom{N}{l}} \enspace. $$
 
 \noindent
 The following tail bound from \cite{Hoef} tells us, that for any $ t > 0 $ and $ p = \frac{K}{N} $:
 
 $$ \mathbb{P}[X \geq (p + t)l] \leq e^{-2t^{2}l} \enspace. $$
 
 \noindent
 Identifying this with our process we have that $ K = M/2 $, $ N = m $, $ l = \sqrt{L} $ and consequently $ p = 1/2 $. Placing $ t = 1/4 $ we have that
 
 $$ \mathbb{P}\left[X \geq \frac{3}{4}\sqrt{L}\right] \leq e^{-\frac{1}{8}\sqrt{L}} \enspace. $$
\end{proof}

\begin{lemma}
 Assume that the number of operational machines is in $(\frac{m}{2^{i}}, \frac{m}{2^{i-1}}] $ interval. Then procedure
 \mat($i, L, m $) will return \textit{true} with probability $ 1 - e^{-c\;(\sqrt{L} + \log_{2}(m))} $, for some $ 0 < c < 1 $.
 \label{lem44}
\end{lemma}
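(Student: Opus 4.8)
The plan is to treat the $\sqrt{L}+\log m$ iterations of the inner loop of \mat\ as nearly-independent Bernoulli trials, where a \emph{success} in an iteration means that exactly one not-yet-promoted operational machine broadcasts; on a success that machine is moved to the front of \texttt{MACHINES} (it becomes a \emph{leader}) and \texttt{coin} is decremented, and \mat\ returns \texttt{true} exactly when at least $\sqrt{L}$ successes have occurred. The hypothesis $M\in\bigl(\tfrac{m}{2^{i}},\tfrac{m}{2^{i-1}}\bigr]$ together with the initialization $\texttt{coin}=c_0:=\tfrac{m}{2^i}$ means the coin is \emph{calibrated to the population}: $c_0<M\le 2c_0$. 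I would track the pair $(M_t,c_t)$ $=$ (number of operational machines not yet promoted, current value of \texttt{coin}) before iteration $t$: a successful iteration decrements both by one, an unsuccessful one changes neither, and a crash of a not-yet-promoted machine decreases $M_t$ only (its round is fixed in advance since the adversary is non-adaptive). Hence $M_t-c_t$ is non-increasing, and absent crashes during the run it equals $\delta:=M-c_0\in[1,c_0]$, giving $M_t=c_t+\delta$ with $c_t\ge c_0-\sqrt L$ for as long as fewer than $\sqrt L$ leaders have been recruited; in the principal regime $c_0\ge 2\sqrt L$ this yields $1\le M_t/c_t\le 3$ throughout, while the borderline scales $c_0=\Theta(\sqrt L)$ are routed to \algx\ by the guard ``$\tfrac{m}{2^i}\le\sqrt L$'' in \algz\ (Fact~\ref{fact43}).

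The first main step is a per-iteration estimate: conditioned on the history, while fewer than $\sqrt L$ leaders have been recruited and the calibrated window $M_t/c_t\in[1,3]$ holds, the probability that exactly one machine broadcasts is at least an absolute constant $q>0$. Indeed this probability is $M_t\cdot\tfrac1{c_t}\bigl(1-\tfrac1{c_t}\bigr)^{M_t-1}\ge e^{-6}=:q$, using $1\le M_t/c_t\le 3$ and $c_t\ge2$ (and it is $1$ if a single unpromoted machine remains). The second main step is concentration: coupling the iterations one by one with independent $\mathrm{Bernoulli}(q)$ trials and freezing the coupling once $\sqrt L$ leaders are reached, the number $X$ of recruited leaders stochastically dominates a $\mathrm{Binomial}(N,q)$ variable with $N=\Theta(\sqrt L+\log m)$ (the implicit constant in this loop bound being whatever the estimate below requires); choosing it large enough that $qN\ge 2\sqrt L$, a Chernoff lower-tail bound gives $\Pr[X<\sqrt L]\le e^{-c(\sqrt L+\log m)}$ for a constant $0<c<1$, which is exactly the complement of the event ``\mat\ returns \texttt{true}''.

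I expect the structural claim underlying the first step to be the main obstacle: one must keep the ratio (operational, unpromoted machines)$/\texttt{coin}$ inside a constant window during the whole relevant stretch, even though leaders accumulate — which shrinks both quantities together, by the invariance of $M_t-c_t$ — and the non-adaptive adversary may crash up to $f$ machines during the run, which shrinks $M_t$ alone and can push the ratio below $1$. Handling the latter uses that the crash schedule is fixed in advance, and that any run in which many machines are crashed early transitions to a smaller scale where \algx\ takes over; once the window is secured, the Chernoff step is routine. For later use in \algz\ it is also worth recording that, conditioned on the successes, the recruited leaders form a (nearly) uniformly random subset of the still-operational machines, which is the property that makes Lemma~\ref{lem42} applicable.
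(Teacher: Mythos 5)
Your proposal follows the same overall route as the paper's proof: argue a constant per-iteration lower bound on the probability that exactly one unpromoted machine broadcasts, then apply a Chernoff lower-tail bound to the $\sqrt{L}+\log m$ iterations to get the stated exponentially small failure probability. The paper's in-proof Claim simply fixes a ratio window — machines in $(x/2,x]$ with $\texttt{coin}=x$ — and quotes a single-iteration bound of $\tfrac{1}{2\sqrt{e}}$, without tracking how \texttt{coin} and the pool of unpromoted operational machines co-evolve. You do track this: you identify $M_t-c_t$ as the invariant under recruitment (both decrement together), show the ratio $M_t/c_t$ stays in a bounded window $[1,3]$ over the first $\sqrt L$ recruitments provided $c_0\ge 2\sqrt L$, and note that the complementary scale $c_0=\mathcal{O}(\sqrt L)$ is precisely the regime routed to \algx\ by the guard in \algz. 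That is a genuinely tighter account of the dynamics than what the paper writes down. You also implicitly flag a real issue in the paper's Chernoff step: with per-iteration success probability around $\tfrac{1}{2\sqrt e}<1$, a budget of exactly $\sqrt L+\log m$ iterations does \emph{not} in general make the expected number of successes exceed $\sqrt L$, so the paper's $\epsilon=\tfrac{\mu-\sqrt L}{\mu}$ is negative outside the regime $\log_2 m=\Omega(\sqrt L)$, and the quoted lower-tail bound is vacuous there. Your remedy — take the loop bound to be $\Theta(\sqrt L+\log m)$ with the constant chosen so that $qN\ge 2\sqrt L$ — is the right fix, and you are explicit that the constant in the loop length is being set to make the concentration work. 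The one point both you and the paper leave informal is adversarial crashes \emph{during} a single \mat\ run: you correctly observe these break the $M_t-c_t$ invariant and can push the ratio below the window, and you gesture at non-adaptivity and the \algx\ fallback; the paper is silent on this. Your treatment is not a complete argument for that case, but it is at least as rigorous as what the paper provides and more honest about where the difficulty lies.
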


\begin{proof}

\begin{claim}
 Let the current number of operational machines be in $ (\frac{x}{2}, x] $.
 Then the probability of an event that in a single iteration of \mat\ exactly one machine will broadcast 
 is at least $ \frac{1}{2\sqrt{e}} $ (where the $\texttt{coin}^{-1}$ parameter is $ \frac{1}{x} $).
\end{claim}

\begin{proof}
Consider a scenario where the number of operational machines is in $ (\frac{x}{2}, x] $ for some $ x $. 
If every machine broadcasts with probability of success equal $ 1/x $ then the probability of an event that exactly one machine will transmit is $ (1 - \frac{1}{x})^{x-1} \geq 1/e $. 
Estimating the worst case, when there are $ \frac{x}{2} $ operational machines (and the probability of success remains $ 1/x $) we have that
$$ \frac{1}{2} \left(1 - \frac{1}{x}\right)^{x \cdot \frac{x-2}{2}} \geq \frac{1}{2\sqrt{e}} \enspace. $$
\end{proof}

According to the Claim the probability of an event that in a single round of \mat\ exactly one machine will be heard is $ \frac{1}{2\sqrt{e}} $.

We assume that $ M \in (\frac{m}{2^{i}}, \frac{m}{2^{i-1}}] $. We show that the algorithm confirms appropriate $ i $ with probability $ 1 - e^{-c\;(\sqrt{L}+\log_{2}m)} $. 
For this purpose we need $ \sqrt{L} $ transmissions to be heard. 

Let $ X $ be a random variable such that $ X = X_{1} + \cdots + X_{\sqrt{L}+\log_{2}(m)}, $ 
\\where $ X_{1}, \cdots, X_{\sqrt{L}+\log_{2}(m)} $ are Poisson trials and

$$
X_{k} = \left\{ \begin{array}{ll}
1 & \textrm{if machine broadcasted,}\\
0 & \textrm{otherwise.}
\end{array} \right 
.$$
We know that $$ \mu = \mathbb{E}X = \mathbb{E}X_{1} + \cdots + \mathbb{E}X_{\sqrt{L}+\log_{2}(m)} \geq \frac{\sqrt{L}+\log_{2}(m)}{2\sqrt{e}} \enspace. $$
To estimate the probability that $ \sqrt{L} $ transmissions were heard we will use the Chernoff's inequality.

We want to have that $ (1-\epsilon)\mu = \sqrt{L} $. Thus $ \epsilon = \frac{\mu - \sqrt{L}}{\mu} = \frac{\sqrt{L} + \log_{2}(m) - 2\sqrt{e}\sqrt{L}}{\sqrt{L} + \log_{2}(m)} $. In what follows
$ 0 < \epsilon^{2} < 6 $ i.e. it is bounded.
Hence

$$ \mathbb{P}[X < \sqrt{L}] \leq e^{-\frac{\left(\frac{\sqrt{L} + \log_{2}(m) - 2\sqrt{e}\sqrt{L}}{\sqrt{L} + \log_{2}(m)}\right)^{2}}{2}  \frac{\sqrt{L} + \log_{2}(m)}{2\sqrt{e}}}
= e^{-c\;\sqrt{L} + \log_{2}(m)} \enspace, $$
for some bounded constant $ c $. We conclude that with probability $ 1 - e^{-c\;\sqrt{L} + \log_{2}(m)} $ we confirm the correct $ i $ which describes and estimates 
the current number of operational machines.
\end{proof}

\begin{lemma}
 $\mat(i, L, m) $ will \textbf{not be} executed if there are more than $ \frac{m}{2^{i-1}} $ operational machines, with probability not less than  
 $ 1 - (\log_{2}(m))^{2}\:\max\{ e^{-\frac{1}{8}\sqrt{L}} ,e^{-c\;\sqrt{L} + \log_{2}(m)} \}. $
 \label{lem46}
\end{lemma}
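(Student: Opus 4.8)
The plan is to prove this as a contrapositive / union-bound argument over the phases $i = 1, \dots, \lceil \log_2 m \rceil$ of the outer loop of \algz. The statement says that $\mat(i, L, m)$ is essentially never executed at a loop index $i$ for which the true number of operational machines $M$ still exceeds $\frac{m}{2^{i-1}}$ (i.e.\ the loop has not yet ``caught up'' with the shrinking machine population); the failure probability $1 - (\log_2 m)^2 \max\{e^{-\frac18\sqrt L}, e^{-c\sqrt L + \log_2 m}\}$ is exactly $(\log_2 m)^2$ copies of the two bad events already quantified in Lemma~\ref{lem42} and Lemma~\ref{lem44}.

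\medskip
\noindent\textbf{Step 1: identify the two ways index $i$ can be reached ``too early''.} First I would argue that to reach iteration $i$ of the repeat loop, the algorithm must have passed through iterations $1, \dots, i-1$ without clearing \texttt{JOBS}, and in particular at each earlier index $i'$ where $M$ was in the interval $(\frac{m}{2^{i'}}, \frac{m}{2^{i'-1}}]$ the call $\mat(i', L, m)$ must have returned \texttt{true} so that $\sqrt T$-length \tpbb\ epochs were run and the jobs either got done or the leaders got crashed. The only way the loop advances to index $i$ while $M > \frac{m}{2^{i-1}}$ is that \emph{either} some earlier $\mat(i', L, m)$ returned \texttt{false} despite $M$ lying in the ``correct'' interval $(\frac{m}{2^{i'}}, \frac{m}{2^{i'-1}}]$ --- this is the event bounded by Lemma~\ref{lem44} by $e^{-c\sqrt L + \log_2 m}$ --- \emph{or} a set of leaders was wiped out by the adversary before the jobs could be finished, forcing a re-draw that consumed an extra increment of $i$; and the probability that $3/4$ of the $\sqrt L$ leaders fall into the adversary's first $M/2$ crashes is bounded by $e^{-\frac18\sqrt L}$ via Lemma~\ref{lem42}.

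\medskip
\noindent\textbf{Step 2: union bound over $i$ and over re-draws.} There are at most $\lceil \log_2 m\rceil$ values of $i$, and within each the subroutine $\mat$ is invoked (and leaders re-drawn) at most $O(\log_2 m)$ times before the productive/crash accounting forces progress --- here I would lean on the earlier observation that a constant number of $\sqrt T$-epochs suffices to perform all tasks when no crashes occur, so between two consecutive forced advances of $i$ there are only $O(\log m)$ re-draw rounds. Multiplying, the number of places where one of the two bad events could occur is $O((\log_2 m)^2)$, and summing the per-event bounds gives the stated $(\log_2 m)^2 \max\{e^{-\frac18\sqrt L}, e^{-c\sqrt L + \log_2 m}\}$ total failure probability; the complementary event is precisely that $\mat(i, L, m)$ is not executed while $M > \frac{m}{2^{i-1}}$.

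\medskip
\noindent\textbf{Main obstacle.} The delicate point is the bookkeeping in Step~2: pinning down exactly how many times $\mat$ can be (re)invoked at a single index $i$, and showing this is $O(\log m)$ rather than something larger, requires carefully coupling the ``less than $\frac14\sqrt L$ broadcasts heard'' exit condition of the inner repeat loop with the adversary's crash budget --- each re-draw must be chargeable to $\Omega(M)$ genuine crashes (so that after $O(\log m)$ re-draws either $f$ is exhausted or $M$ has dropped below $\sqrt L$ and control passes to \algx), and making the interval-indexing of $M$ consistent with the ``leaders moved to the front of \texttt{MACHINES}'' invariant. Once that coupling is stated cleanly, the probability estimate is just the two lemmas plus a union bound; the conceptual work is entirely in the combinatorial structure of which iterations are ``dangerous.''
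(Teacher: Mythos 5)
Your Step~1 correctly isolates the same two bad events the paper uses --- a false negative of $\mat$ at the right interval (Lemma~\ref{lem44}) and a disproportionate wipe-out of leaders (Lemma~\ref{lem42}) --- so the per-event probabilities match. However, the way you assemble the $(\log_2 m)^2$ factor is genuinely different from, and in one respect weaker than, the paper's. The paper defines the event $A_i$ that at the start of an execution of $\mat(i, L, m)$ there are at most $\frac{m}{2^{i-1}}$ operational machines, and propagates this invariant recursively via $\mathbb{P}(A_i) \ge \mathbb{P}(A_i \mid A_{i-1})\,\mathbb{P}(A_{i-1})$, bounding the single-step transition probability $\mathbb{P}(A_i \mid A_{i-1}) \ge 1 - \max\{\cdot\}$ by casing on whether $\mat(i-1)$ returned \texttt{false} (Lemma~\ref{lem44}) or \texttt{true} with the inner loop subsequently broken (Lemma~\ref{lem42}). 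Unrolling to $(1-\max\{\cdot\})^i$ and applying the Bernoulli inequality gives $\mathbb{P}(A_i) \ge 1 - i\,\max\{\cdot\}$, and a final union bound over the $\log_2 m$ indices yields $(\log_2 m)^2$. Your version instead takes a flat union bound over index-times-re-draw pairs; the two $\log_2 m$ factors you produce are ``number of indices'' times ``number of re-draws per index,'' which is a different partition of the failure budget than ``recursion depth'' times ``number of indices.''

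The gap you flag yourself is real: nothing in the algorithm caps the number of times $\mat(i,\cdot,\cdot)$ is re-invoked at a fixed $i$ to $O(\log m)$, and the paper does not supply such a bound --- it sidesteps the issue entirely by reasoning only about the \emph{invariant at the moment $i$ advances}, not about the count of invocations. To repair your route you would have to either (a) prove a per-index re-draw bound by charging each re-draw to a constant-fraction depletion of operational machines (which is plausible but not spelled out anywhere), or (b) abandon the invocation-count bookkeeping and instead argue, as the paper does, that conditional on the invariant holding at index $i-1$, a single application of Lemma~\ref{lem42} or Lemma~\ref{lem44} suffices to carry it to index $i$ regardless of how many intermediate calls occurred. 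Option (b) is the cleaner path and is what the proof actually does.
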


\begin{proof}
Let $ A_{i} $ denote an event that at the beginning of and execution of the \mat($i, L, m $) procedure there are no more than $ \frac{m}{2^{i-1}} $ operational machines.

The basic case when $ i = 0 $ is trivial, because initially we have $ m $ operational machines, thus $ \mathbb{P}(A_{0}) = 1 $. 
Consider an arbitrary~$ i $.
We know that 

$$ \mathbb{P}(A_{i}) = \mathbb{P}(A_{i}|A_{i-1})\mathbb{P}(A_{i-1}) + \mathbb{P}(A_{i}|A^{c}_{i-1})\mathbb{P}(A^{c}_{i-1}) \geq \mathbb{P}(A_{i}|A_{i-1})\mathbb{P}(A_{i-1}) \enspace. $$
Let us estimate $ \mathbb{P}(A_{i}|A_{i-1}) $. Conditioned on that event $ A_{i-1} $ holds, we know that after executing \mat($ i-1, L, m $) we had $ \frac{m}{2^{i-2}} $ operational machines.
In what follows if we are now considering \mat($i, L, m$), then we have two options:

\begin{enumerate}
 \item \mat($i-1, L, m$) returned \textit{false},
 \item \mat($i-1, L, m$) returned \textit{true}.
\end{enumerate}
We examine what these cases mean:
\begin{enumerate}
 \item[Ad 1.] If the procedure returned \textit{false} then we know from Lemma~\ref{lem44} that with probability $ 1 - e^{-c\;\sqrt{L} + \log_{2}(m)} $ there had to be no more than $ \frac{m}{2^{i-1}} $ operational
 machines. If that number would be in  $ (\frac{m}{2^{i-1}}, \frac{m}{2^{i-2}}] $ then the probability of returning \textit{false} would be less than $ e^{-c\;\sqrt{L} + \log_{2}(m)} $.
 \item[Ad 2.] If the procedure returned \textit{true}, this means that when executing it with parameters $ (i-1, L, m) $ we had no more than $ \frac{m}{2^{i-1}} $ operational machines.
 Then the internal loop of \algz\ was broken, so according to Lemma~\ref{lem42} we conclude that the overall number of operational machines had to reduce by half with probability at least
 $ 1 - e^{-\frac{1}{8}\sqrt{L}} $.
\end{enumerate}
Consequently, we deduce that $ \mathbb{P}(A_{i}|A_{i-1}) \geq (1 - \max\{ e^{-\frac{1}{8}\sqrt{L}} ,e^{-c\;\sqrt{L} + \log_{2}(m)} \}) $. Hence
$ \mathbb{P}(A_{i}) \geq (1 - \max\{ e^{-\frac{1}{8}\sqrt{L}} ,e^{-c\;\sqrt{L} + \log_{2}(m)} \})^{i} $. Together with the fact, that $ i \leq \log_{2}(m) $ and the Bernoulli inequality we have that

$$ \mathbb{P}(A_{i}) \geq 1 - \log_{2}(m)\:\max\{ e^{-\frac{1}{8}\sqrt{L}} ,e^{-c\;\sqrt{L} + \log_{2}(m)} \} \enspace.$$
We conclude that the probability that the conjunction of events $ A_{1},\cdots,A_{\log_{2}(m)} $ will hold is at least

$$ \mathbb{P}\left(\bigcap_{i=1}^{log_{2}(m)}A_{i}\right) \geq 1 - (\log_{2}(m))^{2}\:\max\{ e^{-\frac{1}{8}\sqrt{L}} ,e^{-c\;\sqrt{L} + \log_{2}(m)} \} \enspace.$$%
\end{proof}

\begin{theorem}
 \algz\ performs $\mathcal{O}(L + m\sqrt{L} + m\alpha) $ expected work against the non-adaptive adversary. 
 \label{theorem41}
\end{theorem}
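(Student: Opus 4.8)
The plan is to split the analysis of \algz\ according to the branch of Algorithm~\ref{algorithm42} that is taken, and to bound the expected work in each branch separately, using the lemmas already established. First, if $m^2 \le L$, the algorithm simply runs \algx, whose deterministic work is $L + \cO(m\sqrt{L} + m\min\{f,L\} + m\alpha)$ by Theorem~\ref{thm21}; when $m^2 \le L$ we have $m \le \sqrt{L}$, so $m\min\{f,L\} \le m\sqrt{L}$ and this reduces to $\cO(L + m\sqrt{L} + m\alpha)$, giving the bound directly. Second, in the degenerate branch where $\log_2 m > e^{\sqrt{L}/32}$ (so $L$ is very small relative to $\log m$), every machine performs all $L$ tasks locally in an $L$-phase epoch with no communication, contributing $\cO(mL)$ work, and then \conwor\ is run to let machines halt; here $L = \cO(\log\log m /\, \cdot)$ is so small that $mL$ is absorbed into $\cO(m\sqrt{L})$ after checking the arithmetic, plus the work of \conwor\ which terminates once a single broadcast is heard. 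Third — the main case — is the repeat loop: I would argue that, with high probability, the predicted interval index $i$ tracks the true number $M$ of operational machines (this is exactly what Lemmas~\ref{lem44} and~\ref{lem46} give us), so that \mat\ returns \texttt{true} for the correct $i$, a fresh set of $\sqrt{L}$ leaders is chosen, and \tpbb\ is invoked with epoch length $\sqrt{L}$, accruing at most $m\sqrt{L}\phi = m\sqrt{L}$ work per epoch while performing $1 + 2 + \cdots + \sqrt{L} = \Theta(L)$ tasks, hence $\cO(1)$ epochs suffice to clear all tasks if the leaders survive.

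The crux is bounding the expected \emph{wasted} work caused by adversarial crashes of leaders. Here I would invoke the non-adaptivity of the adversary: it must fix the crash schedule in advance, while leaders are selected by independent coin tosses afterwards, so by Lemma~\ref{lem42} the probability that a $\tfrac34$-fraction of the $\sqrt{L}$ leaders is destroyed before the adversary has spent half of the then-operational machines is at most $e^{-\sqrt{L}/8}$. Consequently, with each ``restart'' of the leader-selection/\tpbb\ cycle, the adversary must pay a constant fraction of its remaining crash budget; since the budget is at most $m$ and the machine count roughly halves across the $\lceil\log_2 m\rceil$ outer iterations, the total work charged to crashes telescopes. The idle work of non-leaders during \tpbb\ epochs and during \mat\ is $\cO(m\sqrt{L})$ per constant-length cycle, and summing over the geometrically shrinking machine counts $m/2^i$ gives $cm\sum_i \sqrt{L}/2^i = \cO(m\sqrt{L})$, exactly as in the proof of Theorem~\ref{thm21}. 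The $m\alpha$ term enters only through the fallback call to \algx\ once $m/2^i \le \sqrt{L}$ (equivalently $M \le \sqrt{L}$), whose work is $\cO(L + m\alpha)$ by Fact~\ref{fact43}; since this happens at most once, it contributes $\cO(L + m\alpha)$ overall.

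Putting the pieces together, on the high-probability event $\bigcap_i A_i$ of Lemma~\ref{lem46} the total work is $\cO(L + m\sqrt{L} + m\alpha)$; on the complementary event, of probability at most $(\log_2 m)^2 \max\{e^{-\sqrt{L}/8}, e^{-c\sqrt{L}+\log_2 m}\}$, I would bound the work crudely — the algorithm is Las Vegas, so in the worst case it eventually falls through to a deterministic \algx\ execution with at most $L + \cO(m^2 + m\alpha)$ work (using $\min\{f,L\} \le m$) — and observe that the failure probability is small enough (because $L \ge \log^2 m$ in this branch, as enforced by the condition $\log_2 m \le e^{\sqrt{L}/32}$ being negated in the bad branch and handled separately) that the product of failure probability and worst-case work is $\cO(m\sqrt{L})$. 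The expected work is therefore the sum of these two contributions, namely $\cO(L + m\sqrt{L} + m\alpha)$, matching the lower bound of Lemma~\ref{lem22}.

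I expect the main obstacle to be the clean bookkeeping of the ``restart'' argument: making precise the claim that every repetition of the inner \texttt{Repeat} loop (triggered by hearing fewer than $\tfrac14\sqrt{L}$ broadcasts) forces the adversary to expend a constant fraction of its surviving budget, and that the number of such restarts across all outer iterations is $\cO(\log m)$ in expectation, so that the crash-induced work sums to $\cO(m\sqrt{L} + m\min\{f,L\})$ rather than blowing up multiplicatively. This is where the interplay between the hypergeometric tail bound (Lemma~\ref{lem42}), the interval-tracking invariant (Lemma~\ref{lem46}), and the geometric decay of $m/2^i$ must be combined carefully.
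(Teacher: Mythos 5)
Your decomposition mirrors the paper's almost exactly: split by branch ($m^2 \le L$; the degenerate branch $\log_2 m > e^{\sqrt{L}/32}$; the main repeat loop), invoke Fact~\ref{fact43} for the \algx\ fallback, use Lemma~\ref{lem42} to argue that killing a large fraction of leaders whp forces the adversary to have spent a constant fraction of the operational machines, and use Lemma~\ref{lem46} to condition on the event that the estimate $m/2^i$ tracks the true machine count. This matches the paper's structure and its key lemmas, and your identification of where the $m\alpha$ term comes from (the \algx\ fallback once $m/2^i \le \sqrt{L}$, via Fact~\ref{fact43}) is correct.

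There is, however, a concrete error in the step where you close the bad-event contribution. You claim that the condition $\log_2 m \le e^{\sqrt{L}/32}$ forces $L \ge \log^2 m$. It does not: taking logarithms gives only $\sqrt{L} \ge 32\ln\log_2 m$, i.e.\ $L = \Omega(\log^2\log m)$, which is exponentially weaker than $L \ge \log^2 m$. With only this, multiplying your crude worst-case work bound $L + \cO(m^2 + m\alpha)$ by the failure probability $(\log m)^2\max\{e^{-\sqrt{L}/8}, e^{-c(\sqrt{L}+\log m)}\}$ does not land in $\cO(m\sqrt{L})$: the term $(\log m)^2 e^{-\sqrt{L}/8}\,m^2$ can greatly exceed $m\sqrt{L}$ when $m$ is near its ceiling $2^{e^{\sqrt{L}/32}}$. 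The paper's arithmetic is different: it bounds the bad-event work by $\cO(mL + m\sqrt{L}\log^2 m)$, splits $e^{-\sqrt{L}/8} = e^{-\sqrt{L}/16}\cdot e^{-\sqrt{L}/16}$, and uses $\log_2 m \le e^{\sqrt{L}/32}$ to absorb the $\log^2 m$ factor into one exponential factor and $L$ into the other, arriving at $\cO(m\log m)$. Your version of this cancellation does not go through as stated. Relatedly, in the degenerate branch you say $mL$ "is absorbed into $\cO(m\sqrt{L})$," but $mL \gg m\sqrt{L}$ for $L>1$; the paper instead argues $e^{\sqrt{L}/32} > L$ (for $L$ large enough), hence $\log_2 m > L$ and $mL = \cO(m\log m)$.
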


\begin{proof}
In the algorithm we are constantly controlling whether condition $ \frac{m}{2^{i}} > \sqrt{L} $ holds. If not, then we execute \algx\ which
complexity is $ \mathcal{O}(L + m \alpha) $ for such parameters.

If this condition does not hold initially then we check another one i.e. whether $ \log_{2}(m) > e^{\frac{\sqrt{L}}{32}} $ holds. For such configuration we assign all the tasks to every machine.
The work accrued during such a procedure is $ \mathcal{O}(mL) $. However when $ \log_{2}(m) > e^{\frac{\sqrt{L}}{32}} $ then together with the fact that  $ e^{x} < x $ we have that $ \log_{2}(m) > L $ 
and consequently the total complexity is $ \mathcal{O}(m\log(m)) $.

Finally, the successful machines, that performed all the jobs have to confirm this fact by executing \conwor. We demand that only one machine will transmit and if this happens the algorithm terminates.
The expected value of a geometric random variable lets us assume that this confirmation will happen in expected number of $ \mathcal{O}(\log(m)) $ rounds, generating $ \mathcal{O}(m\log(m)) $ work.

When none of the conditions mentioned above hold, we proceed to the main part of the algorithm.
The testing procedure by \mat\ for each of disjoint cases, where $ M \in (\frac{m}{2^{i}}, \frac{m}{2^{i-1}}] $ requires a certain amount of work that can be bounded by
$ \mathcal{O}(m \sqrt{L} + m \log(m)) $, as there are $ \sqrt{L} + \log_{2}(m) $ testing phases in each case and at most $ \frac{m}{2^{i}} $ machines take part in a single testing phase for a certain case.

In the algorithm we run through disjoint cases where $ M \in (\frac{m}{2^{i}}, \frac{m}{2^{i-1}}] $. 
From Lemma~\ref{lem42} we know that when some of the leaders were crashed, then a proportional number of all the machines had to be crashed with high probability.
When leaders are crashed but the number of operational machines still remains in the same interval, then the lowest number of jobs will be confirmed if only the initial segment of machines will transmit.
As a result, when half of the leaders were crashed, then the system still confirms $ \frac{L}{8} = \Omega(L) $ tasks.
This means that even if so many crashes occurred, $ \mathcal{O}(1) $ epochs still suffice to do all the jobs. Summing work over all the cases may be estimated as
$ \mathcal{O}(m\sqrt{L}) $.


By Lemma~\ref{lem46} we conclude that the expected work complexity is bounded by:
$$ \left((\log(m))^{2}\:\max\{ e^{-\frac{1}{8}\sqrt{L}} ,e^{-c\;(\sqrt{L}+\log(m))} \}\right)\mathcal{O}(mL + m\sqrt{L}\log^{2}(m)) $$
$$ + \left(1 - (\log(m))^{2}\:\max\{ e^{-\frac{1}{8}\sqrt{L}} ,e^{-c\;(\sqrt{L}+\log(m))} \}\right)\mathcal{O}(m\sqrt{L} + m \log(m)) = \mathcal{O}(m\sqrt{L}) \enspace, $$
where the first expression  comes from the fact, that if we entered the main loop of the algorithm then we know that we are in a configuration where $ \log_{2}(m) \leq e^{\frac{\sqrt{L}}{32}} $. 
Thus we have that
 
$$ \frac{mL + m\sqrt{L}\log^{2}(m)}{e^{\frac{\sqrt{L}}{8}}} \leq \frac{mL + mL\log^{2}(m)}{e^{\frac{\sqrt{L}}{16}}e^{\frac{\sqrt{L}}{16}}} \leq \frac{m + m\log^{2}(m)}{e^{\frac{\sqrt{L}}{16}}} 
\leq m + m\log(m) = \mathcal{O}(m\log(m)) \enspace.  $$
\jm{The second expression follows from the fact that the algorithm was in a condition where $ \sqrt{L} \geq m $, and because it also holds 
$ m \geq \log(m) $ then consequently $ \sqrt{L} \geq \log(m) $.}

Altogether, we have $ \mathcal{O}(L + m\alpha) $ work, resulting from reaching condition $ \frac{m}{2^{i}} > \sqrt{L} $ and $ \mathcal{O}(m\log(m)) $ work resulting from reaching 
$ \log_{2}(m) \leq e^{\frac{\sqrt{L}}{32}} $. Additionally, there is $ \mathcal{O}(L + m\sqrt{L}) $ work, because of the \mat\ procedure and the corresponding effort of the leaders, so overall
the work complexity of \algz\ is $ \mathcal{O}(L + m\sqrt{L} +m\alpha) $,
what ends the proof.
\end{proof}
\section{Comparison}
\label{sec51}

In this section we carry out a brief comparison of the formulas that we proved in previous sections, i.e., the upper bound for preemptive scheduling from
Theorem~\ref{thm21} vs 
the lower bound in the model of jobs without preemption from Corollary~\ref{cor33}.
This comparison shows the range of dependencies between model parameters
for which both models are separated, i.e., the upper bound in the model with 
preemption is asymptotically smaller than the lower bound in the model without preemption.
We settle the scope on the, intuitively greater, bound for the model without preemption and examine how the ranges of the parameters influence the magnitude of the formulas. Let us
recall both formulas: 

\begin{itemize}
 \item[] Preemptive, upper bound: $ L + \mathcal{O}(m\sqrt{L} + m\min\{f, L\} + m\alpha) $;
 \item[] Non-preemptive, lower bound: $ L + \Omega(\frac{L}{n}m\sqrt{n} + \alpha m \min\{ f, n\}) $.
\end{itemize}

If $ L $ is the factor that dominates the bound in the non-preemptive model, then by simple comparison to other factors we have that $ L $ also dominates the bound in the preemptive model.
In what follows both formulas are asymptotically equal when the total number of tasks is appropriately large.

When $ \frac{L}{n}m\sqrt{n} $ dominates the non-preemptive formula, then by a simple observation we have that $ 1 \leq \sqrt{\frac{L}{n}} $, \jm{because the number of all tasks is not less than the number of jobs},
and applying a square root does not affect the inequality. Thus, multiplying both sides of the inequality by $ m\sqrt{L} $ gives us that $ m\sqrt{L} \leq \frac{L}{n}m\sqrt{n} $.
It is also easy to see that $ \alpha m\min\{ f, n\} $ from the non-preemptive formula is asymptotically greater than $ m\min\{f, L\} + m\alpha $ from the preemptive one. 

Consequently, if $ \frac{L}{n}m\sqrt{n} $ or $ \alpha m\min\{ f, n\} $ dominates the bound in the non-preemptive formula then the non-preemptive formula is asymptotically greater than the preemptive one.

On the other hand when $ L $ dominates the bound then they are asymptotically equal as both formulas linearize for such magnitude. Such results are somehow compatible with our suspicions about the two
models as intuitively the non-preemptive model for channel without collision detection is more demanding in most cases.

Note that the upper bound obtained for randomized algorithm \algz\ 
against non-adaptive adversary matches the absolute lower bound 
$L+\Omega(m\sqrt{L} + m\alpha) $
in this model, given in Lemma~\ref{lem22},
and therefore beats both the compared formulas for deterministic solutions.

\section{Conclusions}
\label{sec61}

In this paper we addressed the problem of performing jobs of arbitrary lengths on a multiple-access channel \pw{without collision detection}. In particular, we investigated two scenarios\pw{: with and without preemption}.

The first one was the model with preemption, where jobs can be done partially, even by different machines. 
Jobs with preemption are likely to be seen as chains
of tasks, that need to be done in a certain order. The natural bottleneck for such problem is therefore the \pw{length of the} longest job $ \alpha $.
We showed a lower bound for the considered problem that is $\Omega(L + m\sqrt{L} + m\min\{f, L\} + m\alpha)$ and designed an algorithm that meets the proved bound, \pw{hence settled the problem.} 
Additionally we considered a distinction between oblivious and non-oblivious jobs and showed that \pw{such a distinction does not matter with respect to the adaptive adversary}.

On the other hand we answered the question of how to deal with jobs without preemption, that need to be done in one piece in order to confirm progress. Here, as well, we showed a lower bound for the
problem and developed a solution, basing on the one for preemptive jobs, \pw{yet this} turned out \jm{not to match the lower bound that we showed}. 
The formula for work is $ L + \mathcal{O}(\frac{L}{n}m\sqrt{n} + \alpha m \min\{ f, n\})$, while
we proved the lower bound for the problem to be $ L + \Omega(\frac{L}{n}m\sqrt{n} + \frac{L}{n} m \min\{ f, n\} + m \alpha)$.

What is more, even though we may hide factor $ L $ in the asymptotic notation, it still appears with constant $ 1 $ in both solutions and represent the amount of work, resulting from the number of tasks 
needed to perform for the considered problem.

We also showed that randomization helps in case of the non-adaptive adversary,
lowering the cost of preemptive scheduling to the absolute lower bound 
$L+\Omega(m\sqrt{L} + m\alpha)$,
while it does not help against more severe adaptive adversary.

Considering the open directions for the research considered in this paper, it is  natural to address the question of channels with collision detection. It may happen that the distinction
between oblivious and non-oblivious jobs will matter when such feature is available. Furthermore it is worth considering whether randomization could help improving the \pw{results} while considering non-preemptive setting or against adversaries of intermediate power, as it took place in similar 
papers considering the Do-All problem on a shared channel \cite{KKM2017}. 

Finally, the primary goal of this work was to translate scheduling from classic
models to the model of a shared channel, in which it was not considered in depth; therefore, a natural open direction is to extend the model by other features
considered in the scheduling literature.

\bibliographystyle{plain}


\begin{thebibliography}{10}

\bibitem{Bar-NoyNS03}
Amotz Bar{-}Noy, Joseph Naor, and Baruch Schieber.
\newblock Pushing dependent data in clients-providers-servers systems.
\newblock {\em Wireless Networks}, 9(5):421--430, 2003.

\bibitem{Chl}
Bogdan~S. Chlebus.
\newblock {\em Randomized communication in radio networks, a chapter. In: Pardalos, P.M., Rajasekaran, S., Reif, J.H., Rolim, J.D.P. (eds.), Handbook on Randomized Computing.}
\newblock Kluwer Academic Publisher, Drodrecht, 2001, vol. 1, 401--456.

\bibitem{CDS}
Bogdan~S. Chlebus, Roberto De~Prisco, and Alex~A. Shvartsman.
\newblock Performing tasks on synchronous restartable message-passing
  processors.
\newblock {\em Distrib. Comput.}, 14(1):49--64, January 2001.

\bibitem{CGKS}
Bogdan~S. Chlebus, Leszek Gasieniec, Dariusz~R. Kowalski, and Alexander~A.
  Shvartsman.
\newblock Bounding work and communication in robust cooperative computation.
\newblock In {\em Proceedings of the 16th International Conference on
  Distributed Computing}, DISC '02, pages 295--310, London, UK, UK, 2002.
  Springer-Verlag.

\bibitem{CK}
Bogdan~S. Chlebus and Dariusz~R. Kowalski.
\newblock Randomization helps to perform independent tasks reliably.
\newblock {\em Random Structures and Algorithms}, 24(1):11--41, 2004.

\bibitem{CKL}
Bogdan~S. Chlebus, Dariusz~R. Kowalski, and Andrzej Lingas.
\newblock Performing work in broadcast networks.
\newblock {\em Distributed Computing}, 18(6):435--451, 2006.

\bibitem{DMY}
Roberto De~Prisco, Alain Mayer, and Moti Yung.
\newblock Time-optimal message-efficient work performance in the presence of
  faults.
\newblock In {\em Proceedings of the Thirteenth Annual ACM Symposium on
  Principles of Distributed Computing}, PODC '94, pages 161--172, New York, NY,
  USA, 1994. ACM.

\bibitem{DHW}
Cynthia Dwork, Joseph~Y. Halpern, and Orli Waarts.
\newblock Performing work efficiently in the presence of faults.
\newblock {\em {SIAM} J. Comput.}, 27(5):1457--1491, 1998.

\bibitem{GMY}
Z.~Galil, A.~Mayer, and Moti Yung.
\newblock Resolving message complexity of byzantine agreement and beyond.
\newblock In {\em Proceedings of the 36th Annual Symposium on Foundations of
  Computer Science}, FOCS '95, page 724, Washington, DC, USA, 1995. IEEE
  Computer Society.

\bibitem{Gal}
Robert~G. Gallager.
\newblock A perspective on multiaccess channels.
\newblock {\em IEEE Trans. Information Theory}, 31:124--142, 1985.

\bibitem{GKS}
Chryssis Georgiou, Dariusz~R. Kowalski, and Alexander~A. Shvartsman.
\newblock Efficient gossip and robust distributed computation.
\newblock {\em Theor. Comput. Sci.}, 347(1-2):130--166, November 2005.

\bibitem{GSbook}
Chryssis Georgiou and Alexander~A. Shvartsman.
\newblock {\em {Cooperative Task-Oriented Computing: Algorithms and
  Complexity}}.
\newblock Morgan {\&} Claypool Publishers, 2011.

\bibitem{GJW96}
Veena Gondhalekar, Ravi Jain, and John Werth.
\newblock Scheduling on {Airdisks}: Efficient access to personalized
  information services via periodic wireless data broadcast.
\newblock Technical Report TR-96-25, Department of Computer
  Science, University of Texas, Austin, TX, 1996.


\bibitem{CG91}
Coffman,Jr., E. G. and Garey, M. R.
\newblock Proof of the 4/3 Conjecture for Preemptive vs. Nonpreemptive Two-processor Scheduling,
\newblock In {\em Proceedings of the Twenty-third Annual ACM Symposium on Theory of Computing},
 STOC 1991, pages 241--248, New Orleans, Louisiana USA, 1991. ACM.


\bibitem{KKM2017}
Marek Klonowski, Dariusz~R. Kowalski, and Jaros\l{}aw Mirek.
\newblock Ordered and delayed adversaries and how to work against them on
  shared channel.
\newblock {\em CoRR}, abs/1706.08366, 2017.

\bibitem{Hoef}
Wassily Hoeffding.
\newblock Probability inequalities for sums of bounded random variables.
\newblock In {\em Journal American Statistics Association}, vol. 58,
pages 13-30, 1963.

\bibitem{KS03}
Dariusz~R. Kowalski and Alex~A. Shvartsman.
\newblock Performing work with asynchronous processors: Message-delay-sensitive
  bounds.
\newblock In {\em Proceedings of the Twenty-second Annual Symposium on
  Principles of Distributed Computing}, PODC '03, pages 265--274, New York, NY,
  USA, 2003. ACM.

\bibitem{KWZ2017}
Dariusz~R. Kowalski, Prudence W.H. Wong, and Elli Zavou.
\newblock Fault tolerant scheduling of tasks of two sizes under resource
  augmentation.
\newblock {\em J. of Scheduling}, 20(6):695-711, December 2017.

\bibitem{LST16}
Giorgio Lucarelli, Abhinav Srivastav, and Denis Trystram.
\newblock From preemptive to non-preemptive scheduling using rejections.
\newblock In {\em Preceedings of the 22nd International Conference on Computing and Combinatorics},
  COCOON 2016, pages 510--519, Ho Chi Minh City, Vietnam, 2016.
  
  
\bibitem{McN59}
Robert McNaughton.
\newblock Scheduling with deadlines and loss functions.
\newblock {\em Manage. Sci.}, 6(1):1--12, October 1959.

\end{thebibliography}

\end{document}